\documentclass[reprint,
    % prl,
    % tightenlines,
    aps,
    % epsf,
    % showpacs,
    superscriptaddress,
    % nofootinbib,
    % longbibliography,
    % twocolumn,
    10pt]{revtex4-2}
\usepackage{algorithm} 
\usepackage{algorithmic}
\usepackage{float}
\usepackage{graphicx}  % needed for figures
\usepackage{xcolor}
\usepackage{bm}        % for math
\usepackage{braket}
\usepackage{amssymb}   % for math
\usepackage{epstopdf}
\usepackage{xfrac} 
\usepackage{braket}
\usepackage{cancel}
\usepackage{soul}
\usepackage{amsmath}
\usepackage{amsthm}
\usepackage{amsfonts}
\usepackage{bbm}
\usepackage{comment}
\usepackage{dsfont}
\usepackage{physics}
\usepackage{setspace}
\usepackage{xcolor}
\usepackage[utf8]{inputenc}
\usepackage[english]{babel}
\usepackage[T1]{fontenc}
\usepackage{amsmath,amsthm,amsfonts,amssymb,xcolor}
\definecolor{darkblue}{rgb}{0.1,0.2,0.6}
\definecolor{darkred}{rgb}{0.8,0.1,0.2}
\definecolor{darkgreen}{rgb}{0.31,0.62,0.24}
\definecolor{bleudefrance}{rgb}{0.19, 0.55, 0.91}
\usepackage[colorlinks,citecolor=darkblue,linkcolor=darkblue,urlcolor=darkblue]{hyperref} 
\usepackage{enumerate}
\usepackage{cleveref}
\usepackage{setspace}
\usepackage{url}  % This makes \url work
\usepackage{mathrsfs}
\usepackage{mathtools} 
\usepackage{orcidlink}

\NewDocumentCommand{\lprod}{e{^_}}{%
  \overleftarrow{\prod^{\IfValueT{#1}{#1}}_{\IfValueT{#2}{#2}}}
}
\NewDocumentCommand{\rprod}{e{^_}}{%
  \overrightarrow{\prod^{\IfValueT{#1}{#1}}_{\IfValueT{#2}{#2}}}
}

\def\TV{\mathrm{TV}}

\newcommand{\expect}[2][]{\mathbb{E}_{#1}\!\left[#2\right]}

\theoremstyle{plain}
\newtheorem{theorem}{Theorem}[section]

\theoremstyle{definition}

\theoremstyle{remark}

\newtheoremstyle{nostarparen}%
  {3pt}{3pt}% space above/below
  {\itshape}% body font
  {}% indent
  {\bfseries}% head font
  {.}% punctuation after head
  { }% space after head
  {\thmname{#1}\thmnote{ #3}}% head spec: no parentheses around note
\makeatother

\theoremstyle{nostarparen}
\newtheorem*{lemma*}{Lemma}
\newtheorem*{theorem*}{Theorem}

\begin{document}
\title{Neural Quantum States in Mixed Precision}

\author{Massimo Solinas}
\affiliation{Fakult\"at f\"ur Informatik und Data Science, University of Regensburg,
Universit\"atsstraße 31, D-93040, Regensburg}
\author{Agnes Valenti}
\affiliation{Center for Computational Quantum Physics, Flatiron Institute, 162 Fifth Avenue, New York, NY 10010, USA}
\author{Nawaf Bou-Rabee}
\affiliation{Department of Mathematical Sciences, Rutgers University, Camden, New Jersey 08102, USA and Center for Computational Mathematics, Flatiron Institute, 162 Fifth Avenue, New York, NY 10010, USA}
\author{Roeland Wiersema}
\affiliation{Center for Computational Quantum Physics, Flatiron Institute, 162 Fifth Avenue, New York, NY 10010, USA}
    
 \begin{abstract}
    Scientific computing has long relied on double precision (64-bit floating point) arithmetic to guarantee accuracy in simulations of real-world phenomena. However, the growing availability of hardware accelerators such as Graphics Processing Units (GPUs) has made low-precision formats attractive due to their superior performance, reduced memory footprint, and improved energy efficiency. 
    In this work, we investigate the role of mixed-precision arithmetic in neural-network based Variational Monte Carlo (VMC), a widely used method for solving computationally otherwise intractable quantum many-body systems. We first derive general analytical bounds on the error introduced by reduced precision on Metropolis-Hastings MCMC, and then empirically validate these bounds  on the use-case of VMC.
    We demonstrate that significant portions of the algorithm, in particular, sampling the quantum state,
    can be executed in half precision without loss of accuracy. More broadly, this work provides a theoretical framework to assess the applicability of mixed-precision arithmetic in machine-learning approaches that rely on MCMC sampling. In the context of VMC,  we additionally demonstrate the practical effectiveness of mixed-precision strategies,  enabling more scalable and energy-efficient simulations of  quantum many-body systems.
\end{abstract}

\maketitle 

\section{Introduction}

Many problems in machine learning and the physical sciences require sampling from high-dimensional, often strongly multimodal probability distributions. Markov chain Monte Carlo (MCMC) methods are the standard approach, generating samples through local stochastic updates with guaranteed asymptotic convergence. In modern applications involving neural-network–parameterized distributions, the cost of generating sufficiently decorrelated MCMC samples often dominates the overall computational cost of inference or training \cite{neal2012bayesian, papamarkou2022challenges, du2019implicit, du2020improved, nijkamp2019learning, nijkamp2020anatomy, samsonov2022local}.

In this work, we investigate a simple and broadly applicable strategy for reducing the computational cost of MCMC sampling: the use of reduced-precision arithmetic. Lower-precision formats, such as single precision (\texttt{f32}) and half precision (\texttt{f16} or \texttt{bf16}), have become standard in large-scale machine learning due to their substantial performance gains on modern hardware \cite{narayanan2021efficient, micikevicius2017mixed, sun2019hybrid}. These gains, commonly benchmarked in terms of basic linear-algebra operations such as matrix multiplication (see Appendix Fig.~\ref{fig: gpu_comparison}), come at the expense of increased numerical error.
Reliably assessing both speed-up and error that result from the use of reduced-precision formats is of notable interest in the field of scientific computing, where double precision has traditionally been employed to ensure accuracy ~\cite{higham2022mixed, haidar2018harnessing, kashi2024mixed}. Here, we address this task for the use-case of MCMC sampling.
 Specifically, we derive theoretical bounds on the bias introduced by reduced precision in Metropolis–Hastings (MH) MCMC, and validate these bounds empirically. Finally, we demonstrate that substantial speed-ups can be achieved without loss of accuracy in a representative scientific application where MCMC constitutes a key computational bottleneck: neural quantum states (NQS) for the simulation of quantum many-body systems \cite{carleo2017solving, carrasquilla2020machine}, a method which has been successfully applied to obtain state-of-the-art results on various open problems~\cite{lange2024architectures, medvidovic2024neural}.

The ability to efficiently simulate a quantum many-body system is crucial across a wide range of scientific disciplines. Its impact spans the discovery and characterization of quantum phases in physics, the development of deeper insights into molecular interactions and reaction mechanisms in quantum chemistry, and the design of novel materials with tailored properties in materials science. However, the exponential scaling of the degrees of freedom in quantum many-body wavefunctions poses a fundamental challenge for efficient simulation. NQS offer a prominent way to circumvent this challenge by compressing the quantum many-body wavefunction into a neural network and optimizing it toward the ground state of the system using variational Monte Carlo (VMC)~\cite{mcmillan1965ground, sorella2005wave}. Rather than being data-driven, the approach proceeds by sampling the wavefunction at each optimization step via MCMC. Here we show that the MCMC sampling portion of the VMC algorithm can be significantly sped up with the use of lower-precision formats. We empirically demonstrate that this speed-up comes without sacrificing accuracy in the obtained ground-state approximation, and we relate these results to the derived theoretical bounds connecting them to the features of the targeted ground-state. Our findings show that reducing numerical precision can substantially accelerate NQS optimization while providing a rigorous framework to quantify low-precision sampling accuracy, with applications beyond NQS including Bayesian learning \cite{neal2012bayesian, papamarkou2022challenges} and energy-based models \cite{du2019implicit, du2020improved, nijkamp2019learning, nijkamp2020anatomy, samsonov2022local}.

\subsection{Our Contribution}
\begin{enumerate}
    \item We derive  bounds on the sampling error introduced by low-precision arithmetic on discrete  MH MCMC. We construct a simplified toy model for sampling and find excellent agreement with the theoretical bounds.

    \item We introduce a mixed-precision VMC framework that efficiently performs sampling in low precision while keeping the remaining computations in high precision.
    
    \item We evaluate our method by training various models, including feed-forward networks and residual convolutional neural networks. Our results show that low-precision arithmetic can accelerate sampling by up to $3.5\times$ without degrading the training performance of the neural networks.
\end{enumerate}

\subsection{Previous Work}
Perturbed Markov chains have been
studied from several perspectives.  Early work focused on robustness of
ergodicity and convergence under small perturbations of the transition kernel, highlighting that seemingly benign numerical effects can destroy
ergodicity in pathological cases
\cite{roberts1998convergence,breyer2001note}.  More recently, the effect of
finite-precision arithmetic on the MH accept-reject step
has been examined directly, with an emphasis on how roundoff error can distort
acceptance probabilities and degrade acceptance rates
\cite{hoffman2021roundoff, sountsov2024running}.  A complementary line of work treats approximate
Markov chains abstractly, providing general bounds on the difference between
the invariant distribution of an exact chain and that of an approximate chain
in terms of kernel-level perturbations and mixing properties
\cite{johndrow2015optimal,johndrow2017coupling}. The effect of noise in a Markov chain need not always be destructive. In Pseudo-Marginal MCMC~\cite{andrieu2009pseudo,nicholls2012coupled, liang2013monte, medina2016stability}, noise is intentionally introduced to accelerate Markov chain mixing.

The present work differs in scope and emphasis.  Like other works, we consider a target
distribution $\pi$ (corresponding to exact arithmetic)
and a perturbed distribution $\tilde\pi$ arising from finite-precision effects
in the evaluation of the log-density entering the MH accept--reject step.
Rather than focusing on robustness of ergodicity or bounding error via
abstract kernel norms, we analyze how finite-precision perturbations
propagate through the MH acceptance rule and induce an asymptotic bias, in the sense of \cite{durmus2024asymptotic}. Finally, the analysis of mixed-precision effects in the context of MCMC sampling employed in machine learning, or, NQS in particular, remains underdeveloped. The authors of~\cite{chen2024empowering} mention using lower precision for the evaluation of the wave function, but the impact of this choice is still poorly understood.

\section{Variational Monte Carlo} 

Physical properties at low temperature are dominated by the {\it ground state} of the physical system of interest. Determining the ground state amounts to finding the complex-valued eigenvector associated with the smallest eigenvalue (the {\it ground-state energy}) of a sparse Hermitian operator $H$, called the Hamiltonian, which encodes the total energy of the system.   We denote this eigenvector by $\ket{\psi}$ and refer to it as the {\it ground-state wavefunction}. Throughout this manuscript, we consider an effective model consisting of spin-$1/2$ quantum degrees of freedom on a lattice of $N$ sites. The associated Hilbert space has dimension $2^N$; thus, $\ket{\psi}\in\mathbb{C}^{2^N}$ and the Hamiltonian $H$ is a $2^N \times 2^N$ matrix, making direct diagonalization computationally prohibitive.

Within VMC, this exponential scaling is avoided by approximating the ground state using two crucial ingredients: (i) a parameterized representation of the quantum many-body wave-function $\ket{\psi_\theta}\in\mathbb{C}^{2^N}$; and, (ii) optimization of the parameters $\theta$ by minimizing the Rayleigh-Ritz quotient $E_{\theta} = \bra{\psi_{\theta}} H \ket{\psi_{\theta}}/\braket{\psi_{\theta}}$, which provides an upper bound on the ground-state energy. For (i), we make use of NQS. In particular, an (unnormalized) quantum many-body wavefunction 
$\ket{\psi_{\theta}}$ can be decomposed as
\begin{align}
    \ket{\psi_{\theta}} = \sum_{x\in \mathcal{X}} \psi_{\theta}(x)\ket{x}, \quad \mathcal{X}=\{0,1\}^N,
\end{align}
where $\psi_{\theta}:\mathcal{X}\to \mathbb{C}$, $\theta\in\mathbb{R}^{N_\theta}$, assigns a complex amplitude to each standard basis vector, indexed by the binary vector $x$, corresponding to a spin-configuration. 
A generic parameterization of this quantum many-body wave-function can be obtained by representing the function $\psi_{\theta}$ as a neural network. Common choices of architecture include Restricted Boltzmann Machines (RBMs) \cite{carleo2017solving}, Convolutional Neural Networks (CNNs) \cite{CNN}, Recurrent Neural Networks (RNNs) \cite{RNN} and Transformers \cite{ViT_NQS}. 

For (ii), we need to efficiently estimate the loss function:  
\begin{equation}\label{eq:loss}
    E_\theta = \frac{\sum\limits_{x,x'\in \mathcal{X}}\psi^*_\theta(x)\psi_\theta(x')H(x,x')}{\sum\limits_{x\in \mathcal{X}}|\psi_\theta(x)|^2}=\expect[x\sim\pi_{\theta}]{\epsilon_\theta(x)}
\end{equation}
where $H(x,x')$ are the (sparse) matrix elements of the Hamiltonian and $\pi_{\theta}(x)=p_{\theta}(x)/\sum_{x\in \mathcal{X}}p_{\theta}(x)$, with $p_{\theta}(x)=|\psi_{\theta}(x)|^2$.
In order to evaluate Eq. \ref{eq:loss} exactly, we require computing the wavefunction on a number of configurations that is exponential in the system size $N$. For this reason, the expectation $\expect[x\sim\pi_{\theta}]{\epsilon_\theta(x)}$ is typically approximated using MCMC sampling. The same sampling procedure is used to estimate the gradient of the loss function (see \cref{app: gradients}). To this end, MH is employed to approximately sample from $\pi_{\theta}$. In practice, the neural network directly parametrizes $\log\psi_{\theta}(x)$, from which the unnormalized distribution $p_{\theta}$ is obtained via $\log p_{\theta}(x) = 2\mathfrak{Re}[\log\psi_{\theta}(x)]$. 

From a machine learning perspective, VMC can be interpreted as an unsupervised learning approach, where the training dataset, consisting of feature vectors (spin-configurations) $x$, is generated dynamically at each optimization step by sampling from the current neural network itself via MCMC. Accordingly, a single VMC optimization step comprises three main components: (i) sampling configurations; (ii) computing gradients via backpropagation; and, (iii) applying gradient preconditioning, which is essential for stable and accurate convergence to the ground state.
In the following, we examine the effect of mixed precision on (i), the sampling component of VMC.

\section{Mixed Precision During Sampling}

 Because each VMC iteration requires repeated evaluation of the wave-function amplitudes, VMC remains computationally intensive even on modern GPUs ~\cite{rende2024simple,roth2025superconductivity}. We therefore study the effect of reduced-precision arithmetic on the sampling component of VMC. All proofs are deferred to \Cref{app:proofs}.

Let $p_\theta$ denote the exact target distribution, and write $\pi(x)=p_\theta(x)$, suppressing the dependence on parameters $\theta$ for notational simplicity. Reduced precision is modeled abstractly as an additive perturbation of the log-probability,
\begin{align}\label{eq:pert_chain}
    \log \pi(x) \to \log\pi(x) + \delta(x),
\end{align}
where $\delta(x)$ represents the cumulative numerical error arising from finite-precision evaluation of $\psi_\theta(x)$ and the subsequent computation of $\log \pi(x)$.
Since the perturbation does not preserve the normalization, we define the resulting normalized probability distribution $\tilde{\pi}(x)$ as
\begin{align}\label{eq: pi tilde def}
    \tilde{\pi}(x) =\frac{1}{\mathcal Z} \pi(x) e^{\delta(x)}, \quad \mathcal Z = \sum_{x\in \mathcal{X}} \pi(x) e^{\delta(x)} .
\end{align}
To quantify the difference between the true and perturbed distributions, we utilize the {\it total variation} (TV) distance,
\begin{align}\label{eq:tv}
\|\pi - \tilde\pi\|_{\mathrm{TV}}
:= \frac{1}{2} \sum_{x \in \mathcal{X}} \left| \pi(x) - \tilde\pi(x) \right|,
\end{align}
a standard metric for comparing probability measures~\cite{meyn2012markov,LevinPeresWilmer2017,DoucMoulinesPriouretSoulier}.

When the target distribution is perturbed, the expectation of any observable $f:\mathcal{X}\to \mathbb{R}$ estimated via Monte Carlo, such as the VMC energy in Eq.~\ref{eq:loss}, generally differs from its expectation under the true
target.
For bounded functions $f$ satisfying $\max_x |f(x)|\leq A$, this bias satisfies 
\begin{align}\label{eq:bias_gaussian}
    |\mathbb{E}_{x\sim \tilde\pi}[f(x)] - \mathbb{E}_{x\sim \pi}[f(x)]| 
    &\leq 2 A \|\pi - \tilde\pi\|_{\mathrm{TV}},
\end{align}
which follows directly from the triangle inequality and the definition of total variation distance. In VMC, energy gradients are estimated from samples drawn from $\pi(x)$; therefore, understanding how deviations from the true distribution affect expectation values is crucial, as gradient noise can impact the optimization. In order to practically estimate these effects, we derive concrete bounds on the RHS of Eq.~\ref{eq:bias_gaussian}, that directly relate to the moments of the  perturbation $\delta$. 

To obtain a baseline bound on the bias induced by finite-precision arithmetic, we model the perturbation $\delta$ as Gaussian.  This assumption is motivated by the central limit theorem and supported empirically in \Cref{sec:numerical}.  For a coarse bound, we deliberately ignore the dynamics of the sampler and focus only on how the perturbation alters the target distribution.  In particular, if $\delta \sim \mathcal N(\mu,\sigma^2)$, we obtain 
\begin{equation}
    \mathrm{KL}(\pi\|\tilde\pi)
= \log \mathbb E_{x \sim \pi}[e^{\delta(x)}]-\mathbb E_{x \sim \pi}[\delta(x)]
= \frac{\sigma^2}{2}.
\end{equation}
By Pinsker's inequality, this implies
\begin{align}\label{eq:tv_gaussian}
\|\pi-\tilde\pi\|_{\mathrm{TV}}
\le \frac{1}{2}\sigma.
\end{align}
Incorporating the mixing behavior of the underlying Markov chain yields substantially sharper bounds in regimes where the chain mixes rapidly.

\subsection{Properties of the Finite-Precision Markov Chain}\label{sec:eff_markov}

By taking into account the mixing time of the Markov chains, we show that it is possible to derive a tighter bound than the one in Eq.~\ref{eq:tv_gaussian}. To this end, we introduce a quantitative notion of convergence to stationarity and relate it to spectral properties of the transition
kernel. Background on spectral gaps, contraction in total variation, and their relationship to mixing times is provided in \Cref{app:spectral}.

For a target distribution $\pi(x)$ on a finite state space $\mathcal X$, the
MH acceptance probability $\alpha(x,y)$ for a proposed update $x \to y$, under a  proposal
distribution $q(y\mid x)$, is
\begin{align}
    \alpha(x,y) &= \min\{1,\, s(x,y)\}, \label{eq: unperturbed acceptance rate} \\
    s(x,y) &:= \exp\bigl(\log\pi(y)-\log\pi(x)\bigr) \quad \forall x,y\in\mathcal X. \nonumber 
\end{align}
The resulting Markov chain is fully specified by the stochastic transition matrix
\begin{align*}
    P(x,y) = 
\begin{cases} 
q(y \mid x)\alpha(x,y) & \text{if } x \neq y, \\
1 - \sum_{z \neq x} P(x,z) & \text{if } x = y.
\end{cases}
\end{align*}

A central result in Markov chain theory states that, under mild regularity conditions (irreducibility and aperiodicity), the
distribution of the chain converges exponentially fast to its stationary
distribution $\pi$, uniformly over all initial states \cite{meyn2012markov,DoucMoulinesPriouretSoulier}.

While convergence rates are formally governed by the spectral gap of $P$, computing this gap is computationally intractable for the large state spaces encountered in NQS. Instead, we utilize the Doeblin minorization condition \cite{DoucMoulinesPriouretSoulier} to provide a robust analytical framework. Specifically, if there exists a constant $\xi \in (0,1)$ and a probability distribution $\nu$ such that $P(x,\cdot) \ge \xi\nu(\cdot)$ for all $x \in \mathcal{X}$, then $P$ acts as a strict contraction in TV distance, so that for all probability distributions $\mu$ and $\mu'$:
\begin{equation}\label{eq:tv_contr_doeb}
\|\mu P - \mu'P\|_{\text{TV}} \le r \|\mu - \mu'\|_{\text{TV}},\end{equation}
where $r := 1 - \xi$ is the contraction rate. The quantity $(1-r)^{-1}$ serves as a proxy for the mixing time, characterizing how rapidly the chain converges to its stationary distribution $\pi$. On a finite state space, this contractivity ensures that all non-trivial eigenvalues of $P$ have a modulus no greater than $r$, thereby guaranteeing a strictly positive spectral gap. The utility of the Doeblin condition in our context is twofold: it provides a lower bound on the spectral gap without requiring direct spectral analysis, and more importantly, it establishes the stability of the chain under perturbation. By assuming our chains satisfy this condition, we can rigorously quantify how the long-run sampling distribution is biased when the transition kernel is perturbed by the noise inherent in reduced-precision arithmetic. 

The perturbed MH acceptance probability associated to $\tilde\pi(x)$, as defined in Eq.~\ref{eq:pert_chain}, can be written as:
\begin{align}\label{eq: perturbed acceptance rate}
    \tilde\alpha(x,y)=\min\left\{1,\,s(x,y) e^{\varepsilon(x,y)}\right\},
\end{align}
where $\varepsilon(x,y):=\delta(y)-\delta(x)$. This, combined with the proposal distribution $q(y \mid x)$ defines the perturbed Markov kernel $\tilde P$. Although it is known that numerical perturbations can, in pathological cases,  destroy ergodicity~\cite{roberts1998convergence, breyer2001note}, we restrict attention to settings in which the perturbed kernel $\tilde P$ remains irreducible and aperiodic, and therefore admits a unique stationary distribution $\tilde\pi$. In this regime, when the ideal Markov chain mixes quickly ($r\ll 1$), small perturbations of its transition kernel cannot push the stationary distribution very far. We formalize this next.

\begin{theorem}\label{lem:piPbound}
Let $\pi$ and $\tilde\pi$ be probability measures on 
$\mathcal{X}$. Let $P$ and $\tilde P$ be Markov kernels with invariant distributions $\pi$ and $\tilde\pi$,
respectively. Assume that $P$ is a strict contraction in TV with constant $r\in(0,1)$
in the sense of Eq.~\ref{eq:tv_contr_doeb}.
Then
    \begin{align*}
        \left\|\tilde \pi -\pi\right\|_{\TV}  & \leq \left\|\tilde \pi(\tilde P - P)  \right\|_{\TV}\frac{1}{(1-r)}.
    \end{align*}
\end{theorem}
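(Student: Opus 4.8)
The plan is to exploit the two invariance identities $\pi P=\pi$ and $\tilde\pi\tilde P=\tilde\pi$ together with the contraction hypothesis on $P$, via a standard add-and-subtract (telescoping) decomposition. First I would rewrite the difference of stationary distributions using both fixed-point relations as $\tilde\pi-\pi=\tilde\pi\tilde P-\pi P$, and then insert the mixed term $\tilde\pi P$ so as to separate a piece that measures the kernel discrepancy from a piece amenable to the contraction:
\[
    \tilde\pi-\pi = \tilde\pi(\tilde P-P)+(\tilde\pi-\pi)P.
\]

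Next I would take the TV norm of both sides and apply the triangle inequality, obtaining $\|\tilde\pi-\pi\|_{\TV}\le\|\tilde\pi(\tilde P-P)\|_{\TV}+\|(\tilde\pi-\pi)P\|_{\TV}$. The crucial step is to control the second term. Since $\tilde\pi$ and $\pi$ are both genuine probability distributions on $\mathcal{X}$, the contraction hypothesis Eq.~\ref{eq:tv_contr_doeb} applies directly with $\mu=\tilde\pi$ and $\mu'=\pi$, giving $\|(\tilde\pi-\pi)P\|_{\TV}=\|\tilde\pi P-\pi P\|_{\TV}\le r\,\|\tilde\pi-\pi\|_{\TV}$. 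Substituting this back yields $\|\tilde\pi-\pi\|_{\TV}\le\|\tilde\pi(\tilde P-P)\|_{\TV}+r\,\|\tilde\pi-\pi\|_{\TV}$; collecting the $\|\tilde\pi-\pi\|_{\TV}$ terms on the left and dividing by $(1-r)>0$ then produces the claimed bound.

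The main obstacle is ensuring the legitimacy of applying the contraction to the \emph{signed} measure $\tilde\pi-\pi$: the hypothesis Eq.~\ref{eq:tv_contr_doeb} is phrased as a contraction on differences of probability measures, not arbitrary signed measures, so I would invoke it as a statement about the ordered pair $(\tilde\pi,\pi)$ rather than treating $\tilde\pi-\pi$ as a standalone input. On a finite state space this distinction is harmless, because $\|\tilde\pi-\pi\|_{\TV}$ is finite and both arguments are honest probability vectors; I would note explicitly that this finiteness is what makes the final rearrangement that isolates $\|\tilde\pi-\pi\|_{\TV}$ valid. The remaining manipulations are purely mechanical.
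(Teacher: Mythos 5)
Your proof is correct and follows essentially the same route as the paper: the identity $\tilde\pi-\pi=\tilde\pi(\tilde P-P)+(\tilde\pi-\pi)P$, the triangle inequality, the contraction applied to the pair $(\tilde\pi,\pi)$, and the final rearrangement using $1-r>0$. Your added remark on why invoking Eq.~\ref{eq:tv_contr_doeb} for the pair of probability measures (rather than for an arbitrary signed measure) is legitimate is a nice touch of care, but the argument itself is the same as the paper's.
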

The proof follows from the triangle inequality together with the
invariance relations $\pi P=\pi$ and $\tilde\pi \tilde P=\tilde\pi$, i.e.,
$\pi$ and $\tilde\pi$ are left eigenvectors with eigenvalue $1$ of $P$ and
$\tilde P$, respectively. \Cref{lem:piPbound} reduces control of the bias in the stationary distribution to control of the difference between the Markov kernels: 
\begin{align}\label{eq:Pdiff}
     \Delta P = 
\begin{cases} 
q(y \mid x)  \Delta\alpha(x,y) & \text{if } x \neq y, \\
 - \sum_{z\neq x} q(z\mid x)\, \Delta\alpha(x,z) & \text{if } x = y,
\end{cases}
\end{align}
where we defined $\Delta P\coloneqq\tilde P-P$ and $\Delta\alpha\coloneqq\tilde\alpha-\alpha$.
\begin{figure*}[t!]
    \centering
    \includegraphics[width=1\linewidth]{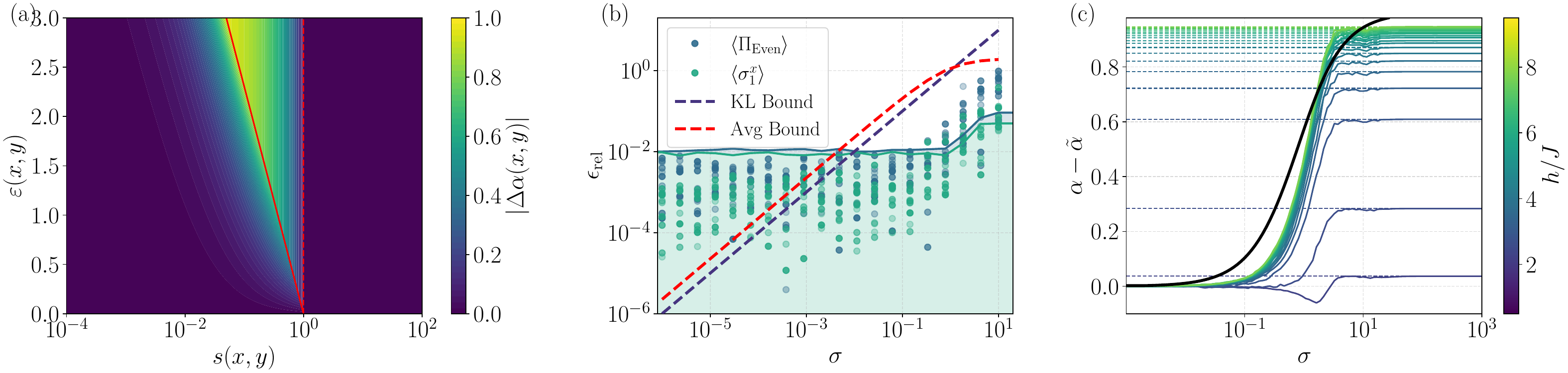}
    \caption{Panel (a) displays the acceptance difference  $|\Delta \alpha(x,y)|$ as a function of the acceptance ratio $s(x,y)$ and the error difference $\varepsilon(x,y)$. The solid red line corresponds to $s(x,y) = e^{-\varepsilon(x,y)}$. For $\varepsilon(x,y)<0$ the plot is inverted over both the x-axis and y-axis. Panels (b) and (c) present results for the noisy RBM. 
    Panel (b) shows the relative error between expectation values computed with the unperturbed and noisy RBM as a function of noise parameter $\sigma$ compared to the KL bound of Eq.~\ref{eq:tv_gaussian} and the average bound of Eq.~\ref{eq:avg_bound}. 
    The expected values of both the Pauli-X operator $\sigma^x$ and the projector onto the even bitstrings $\Pi_{\text{Even}}$ are investigated. Individual points correspond to different random initializations, while shaded regions indicate confidence intervals determined by Monte Carlo sampling noise, calculated as $3\sqrt{\text{Var}[\epsilon_{\text{rel}}]/N_{\text{samples}}}$, where the factor of 3 corresponds to approximately a $99.7\%$ confidence interval assuming a Gaussian distribution of the noise. Results are obtained using $2^{18}$ samples, with the average bound computed for $r=0$. Panel (c) shows the acceptance rate as a function of $\sigma$ for different values of $h/J$ in the TFIM. The dark line represents the bound for $|\Delta\alpha|$, while the dashed lines correspond to the value of the unperturbed acceptance rate $\alpha$.}
    \label{fig:toy}
\end{figure*}
The sensitivity of the Markov kernel to finite-precision errors can be qualitatively understood by analyzing $|\Delta\alpha(x,y)|$ as a function of the acceptance ratio $s(x,y)$ and the error difference $\varepsilon(x,y)$. As illustrated in Fig.~\ref{fig:toy}(a), three distinct regimes emerge. When
$s(x,y) \le e^{-\varepsilon(x,y)}$, both acceptance probabilities scale
proportionally to $s(x,y)$, so the effect of finite-precision error is
suppressed in low-acceptance regimes. Conversely, for $s(x,y) \geq 1$, both chains accept with probability one, rendering the error irrelevant. The kernel is most sensitive in the intermediate regime, $e^{-\varepsilon(x,y)} < s(x,y) < 1$, where proposals lie near the acceptance threshold ($s \approx 1$). Importantly, since $\Delta P$ is weighted by the proposal distribution $q$, the distance between the two kernels is  significant only if such sensitive transitions are proposed with non-negligible probability $q(y \mid x)$. This observation, leads us to the following:

\begin{theorem}\label{lem:varepsbound}
Let $\mathcal X$ be a finite state space, and let $P$ and $\tilde P$ be
MH kernels on $\mathcal X$ constructed with the same proposal
distribution $q(y\mid x)$ and acceptance probabilities $\alpha$ and $\tilde\alpha$, as defined in Eqs.~\ref{eq: unperturbed acceptance rate} and \ref{eq: perturbed acceptance rate}, respectively. Then, for any probability distribution $\pi'$ on $\mathcal X$,
\[
\bigl\|\pi'(\tilde P - P)\bigr\|_{\TV}
\;\le\;
1-\sum_{y\in\mathcal X}
\mathbb{E}_{x\sim\pi'}\!\left[q(y\mid x)e^{-|\varepsilon(x,y)|}\right].
\]
\end{theorem}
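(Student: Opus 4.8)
The plan is to compute $\|\pi'(\tilde P - P)\|_{\TV}$ directly from the definition of TV distance as one-half the $\ell^1$ norm, and then bound the resulting sum over the off-diagonal entries $\Delta P$ given in Eq.~\ref{eq:Pdiff}. Writing $g(x) := \bigl(\pi'(\tilde P - P)\bigr)(x) = \sum_{z} \pi'(z)\,\Delta P(z,x)$, I would first observe that each column sum of $\Delta P$ vanishes — since both $P$ and $\tilde P$ are stochastic, every row of $\Delta P$ sums to zero, which by the structure in Eq.~\ref{eq:Pdiff} lets us absorb the diagonal term. The cleanest route is to avoid the diagonal altogether: I expect to use the standard identity that for a signed measure arising as $\mu(\tilde P - P)$ with $\mu$ a probability measure, the TV norm is controlled by the total ``mass moved,'' namely $\|\pi'(\tilde P-P)\|_{\TV} \le \tfrac12\sum_{x}\sum_{y\neq x}\pi'(x)\,|\Delta\alpha(x,y)|\,q(y\mid x)$ after accounting for both off-diagonal and diagonal contributions. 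In fact the diagonal entry of $\Delta P$ is exactly minus the sum of the off-diagonals in its row, so the row-wise $\ell^1$ mass is $\sum_{y\neq x} q(y\mid x)|\Delta\alpha(x,y)|$ doubled, and the factor $\tfrac12$ cancels one copy, giving
\begin{align*}
\|\pi'(\tilde P-P)\|_{\TV} \le \sum_{y}\mathbb{E}_{x\sim\pi'}\!\left[q(y\mid x)\,|\Delta\alpha(x,y)|\right].
\end{align*}

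The crux is then the pointwise inequality $|\Delta\alpha(x,y)| \le 1 - e^{-|\varepsilon(x,y)|}$, which converts the kernel-difference bound into the stated expression after summing (the ``$1$'' emerges because $\sum_y \mathbb{E}_{x\sim\pi'}[q(y\mid x)] = 1$). To prove this inequality I would use $\alpha = \min\{1,s\}$ and $\tilde\alpha = \min\{1,s\,e^{\varepsilon}\}$ and analyze the three regimes visible in Fig.~\ref{fig:toy}(a): (i) when both $s\ge 1$ and $s e^{\varepsilon}\ge 1$, both acceptances equal $1$ and $\Delta\alpha = 0$; (ii) when both are below $1$, $|\Delta\alpha| = s\,|e^{\varepsilon}-1|$, and since in this regime $s < 1$ while $s e^\varepsilon < 1$ forces $s < e^{-\varepsilon}$ (when $\varepsilon>0$), one checks $s|e^\varepsilon - 1| \le 1 - e^{-|\varepsilon|}$; (iii) the boundary-crossing regime where one of the two saturates at $1$, handled by direct comparison. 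By the symmetry $\varepsilon(x,y) = -\varepsilon(y,x)$, it suffices to treat $\varepsilon \ge 0$ and the absolute values make the bound symmetric.

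The main obstacle I anticipate is case (iii), the intermediate regime $e^{-\varepsilon} < s < 1$ (with $\varepsilon>0$), where $\alpha = s$ but $\tilde\alpha = 1$, so $|\Delta\alpha| = 1 - s$; here I must verify $1 - s \le 1 - e^{-|\varepsilon|}$, i.e.\ $e^{-\varepsilon} \le s$, which holds precisely because we are in this regime. The symmetric sub-case for $\varepsilon < 0$ requires care with which chain saturates. Assembling these into a single clean bound $|\Delta\alpha| \le 1 - e^{-|\varepsilon|}$ valid in all regimes — and confirming it is tight along the red curve $s = e^{-\varepsilon}$ in the figure — is the delicate step; the rest is the routine $\ell^1$/TV bookkeeping sketched above.
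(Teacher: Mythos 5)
Your proposal is, in outline, exactly the paper's proof: reduce $\|\pi'(\tilde P-P)\|_{\TV}$ to the $\pi'$-average of row-wise $\ell^1$ kernel differences (the diagonal of $\Delta P$ is minus the signed sum of its off-diagonal row entries, so its modulus is at most that row's off-diagonal mass, giving the factor $2$ that cancels the $\tfrac12$), then apply the pointwise bound $|\Delta\alpha(x,y)|\le 1-e^{-|\varepsilon(x,y)|}$ and use $\sum_y q(y\mid x)=1$. Your case analysis for $\varepsilon\ge 0$ is also the paper's, including the observation that in the crossing regime $e^{-\varepsilon}<s<1$ the bound $1-s\le 1-e^{-\varepsilon}$ holds precisely because $s>e^{-\varepsilon}$.

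The one step that fails as written is your reduction to $\varepsilon\ge 0$ ``by the symmetry $\varepsilon(x,y)=-\varepsilon(y,x)$.'' The pointwise bound must be proved for every \emph{ordered} pair $(x,y)$ appearing in $\sum_{y\ne x}q(y\mid x)\,|\Delta\alpha(x,y)|$, and swapping $(x,y)\to(y,x)$ changes not only the sign of $\varepsilon$ but also $s\to 1/s$, so the two quantities are different numbers in general: for $s=2$, $\varepsilon=-1$ one gets $|\Delta\alpha(x,y)|=1-2/e\approx 0.26$ while $|\Delta\alpha(y,x)|=1/2$. Hence bounding the pair $(y,x)$ says nothing about the pair $(x,y)$, and this shortcut does not close the $\varepsilon<0$ case. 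The correct reduction (and what the paper's ``by symmetry'' actually means) is the algebraic substitution $s'=se^{\varepsilon}$, $\varepsilon'=-\varepsilon$, which leaves $|\min\{1,se^{\varepsilon}\}-\min\{1,s\}|$ unchanged while the target $1-e^{-|\varepsilon|}$ is invariant under $\varepsilon\to-\varepsilon$. Alternatively, rerun your three cases directly for $\varepsilon<0$: in case (ii), $s\le 1$ already suffices, since $s(1-e^{\varepsilon})\le 1-e^{\varepsilon}=1-e^{-|\varepsilon|}$; in case (iii) it is now the perturbed chain that stays below one, with $|\Delta\alpha|=1-se^{\varepsilon}\le 1-e^{\varepsilon}$ because $s\ge 1$. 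With that repair, your argument coincides with the paper's.
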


The proof bounds $\|\pi' (\tilde P-P)\|_{\TV}$ by first controlling the
row-wise $\ell^1$ difference between $\tilde P(x,\cdot)$ and $P(x,\cdot)$ in
terms of the acceptance–probability difference $|\Delta\alpha(x,y)|$, taking
care of the diagonal term via conservation of probability. A case analysis of
the MH acceptance function then yields
$|\Delta\alpha(x,y)| \le 1-e^{-|\varepsilon(x,y)|}$ for $x\neq y$, and
averaging against the proposal $q(y\mid x)$ gives the stated bound.

The above bounds do not require any assumptions on the distribution of error term $\delta(x)$, the proposal distribution $q(y \mid x)$, or the target distribution $\pi(x)$; consequently, \cref{lem:piPbound} and \cref{lem:varepsbound} are completely general. In the following we derive a bound tailored to VMC, by focusing on local MH proposals and, as in the baseline bound in Eq.~\ref{eq:tv_gaussian},  assuming that the error is Gaussian distributed.
\begin{theorem}[Local MH with Gaussian Noise]\label{thm:tv_markov}
Let $\pi$ and $\tilde{\pi}$ be probability measures on the discrete space $\mathcal{X} = \{0,1\}^N$, related by the perturbation $\delta$ defined in Eq.~\ref{eq: pi tilde def}. Let $P$ and $\tilde{P}$ be the MH kernels targeting $\pi$ and $\tilde{\pi}$, respectively, constructed using a symmetric single-bit-flip proposal distribution $q(y \mid x)$. Assume that $P$ is a strict contraction in TV with constant $r \in (0,1)$, satisfying Eq.~\ref{eq:tv_contr_doeb}. Furthermore, assume that for a joint draw $X \sim \tilde{\pi}$ and $Y \sim q(\cdot \mid X)$, the log-density increment $\varepsilon(X,Y) \coloneqq( \delta(Y) - \delta(X))\sim\mathcal{N}(\mu,2\sigma^2)$. Then,    
\begin{align}\label{eq:avg_bound}
        \left\|\tilde \pi -\pi\right\|_{\TV} &\leq \frac{1}{1-r}
        \bigg(1 - \frac{e^{\sigma^2}}{2}\bigg[
        e^{-\mu} \operatorname{erfc}\left(\sigma-\frac{\mu}{2\sigma}\right)
        \nonumber\\
        &\quad+ e^{\mu} \operatorname{erfc}\left(\sigma+\frac{\mu}{2\sigma}\right)
        \bigg]\bigg).
    \end{align}
\end{theorem}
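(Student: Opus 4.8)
The plan is to chain the two general results already established, \Cref{lem:piPbound} and \Cref{lem:varepsbound}, and then reduce the resulting estimate to a single one-dimensional Gaussian integral. First I would apply \Cref{lem:piPbound} to the pair $(\pi,\tilde\pi)$: since $P$ is assumed to be a strict TV contraction with constant $r$, this gives $\|\tilde\pi-\pi\|_{\TV}\le (1-r)^{-1}\|\tilde\pi(\tilde P-P)\|_{\TV}$. The $(1-r)^{-1}$ prefactor appearing in the target inequality is thereby accounted for immediately, and the whole problem collapses to estimating the kernel-difference term $\|\tilde\pi(\tilde P-P)\|_{\TV}$.

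Next I would invoke \Cref{lem:varepsbound} with the choice $\pi'=\tilde\pi$, which is legitimate since that lemma holds for an arbitrary probability distribution and makes no assumption on the law of $\delta$. This yields $\|\tilde\pi(\tilde P-P)\|_{\TV}\le 1-\sum_{y}\mathbb{E}_{x\sim\tilde\pi}[q(y\mid x)\,e^{-|\varepsilon(x,y)|}]$. The key observation is that $\sum_{y}\mathbb{E}_{x\sim\tilde\pi}[q(y\mid x)\,(\cdot)]$ is exactly the expectation over the joint draw $X\sim\tilde\pi$, $Y\sim q(\cdot\mid X)$, so the bound becomes $\|\tilde\pi(\tilde P-P)\|_{\TV}\le 1-\mathbb{E}[e^{-|\varepsilon(X,Y)|}]$. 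At this point the symmetric single-bit-flip structure of $q$ has done its job: it is what licenses modeling the increment $\varepsilon(X,Y)=\delta(Y)-\delta(X)$ as a single Gaussian, and by hypothesis its marginal law under the joint draw is $\mathcal{N}(\mu,2\sigma^2)$. Only this marginal enters, so the transition-level quantity reduces to the scalar expectation $\mathbb{E}[e^{-|Z|}]$ with $Z\sim\mathcal{N}(\mu,2\sigma^2)$.

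The remaining and most technical step is to evaluate $\mathbb{E}[e^{-|Z|}]$ in closed form. I would split the integral at $z=0$, so that $|z|$ becomes $z$ and $-z$ on the two half-lines, and complete the square in each piece: on $\{z\ge 0\}$ the exponent $-z-(z-\mu)^2/(4\sigma^2)$ recenters the Gaussian at $\mu-2\sigma^2$ and produces the prefactor $e^{\sigma^2-\mu}$, while on $\{z<0\}$ the exponent $z-(z-\mu)^2/(4\sigma^2)$ recenters at $\mu+2\sigma^2$ and produces $e^{\sigma^2+\mu}$. Each shifted half-line integral is a standard Gaussian tail, i.e.\ a $\Phi$-value, which I would rewrite via $\Phi(t)=\tfrac12\operatorname{erfc}(-t/\sqrt2)$; after substituting the variance $2\sigma^2$, the two tail arguments simplify to $\sigma-\mu/(2\sigma)$ and $\sigma+\mu/(2\sigma)$ respectively. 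Collecting the two contributions gives $\mathbb{E}[e^{-|Z|}]=\tfrac{1}{2}e^{\sigma^2}\left[e^{-\mu}\operatorname{erfc}\left(\sigma-\tfrac{\mu}{2\sigma}\right)+e^{\mu}\operatorname{erfc}\left(\sigma+\tfrac{\mu}{2\sigma}\right)\right]$, and inserting this into the chained bound yields Eq.~\ref{eq:avg_bound}. The main obstacle is purely this bookkeeping: the completing-the-square must be carried out carefully so that the recentred means, the exponential prefactors, and the rescaled $\operatorname{erfc}$ arguments all come out consistent after fixing the variance to $2\sigma^2$; there is no conceptual difficulty once the reduction to the scalar integral is in place.
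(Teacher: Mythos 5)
Your proposal is correct and follows essentially the same route as the paper's proof: chain \Cref{lem:piPbound} with \Cref{lem:varepsbound} (applied at $\pi'=\tilde\pi$), recognize the proposal-weighted sum as the expectation over the joint draw $X\sim\tilde\pi$, $Y\sim q(\cdot\mid X)$, and evaluate $\mathbb{E}\bigl[e^{-|Z|}\bigr]$ for $Z\sim\mathcal N(\mu,2\sigma^2)$ by splitting at zero and completing the square. Your completed-square centers $\mu\mp 2\sigma^2$, prefactors $e^{\sigma^2\mp\mu}$, and $\operatorname{erfc}$ arguments $\sigma\mp\mu/(2\sigma)$ all match the paper's closed-form computation, so the argument is sound as stated.
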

The bound in \Cref{thm:tv_markov} highlights a clear separation of effects.
The bias induced by mixed-precision arithmetic is controlled by two quantities:
(i) the noise level $\sigma$ arising from finite-precision evaluation of the log density,
and (ii) the contraction rate $r$ of the ideal Markov chain.
When the ideal chain mixes rapidly ($r \ll 1$), small numerical perturbations
cannot significantly distort the invariant distribution.
Conversely, slow mixing amplifies the effect of numerical noise.
Importantly, the bound depends on $\sigma$ but not explicitly on the system size $N$,
except through its influence on $\sigma$ and $r$.
This clarifies that mixed-precision sampling is viable precisely in regimes where
the sampler is already well behaved.

\begin{figure*}[t!]
    \centering
    \includegraphics[width=1\linewidth]{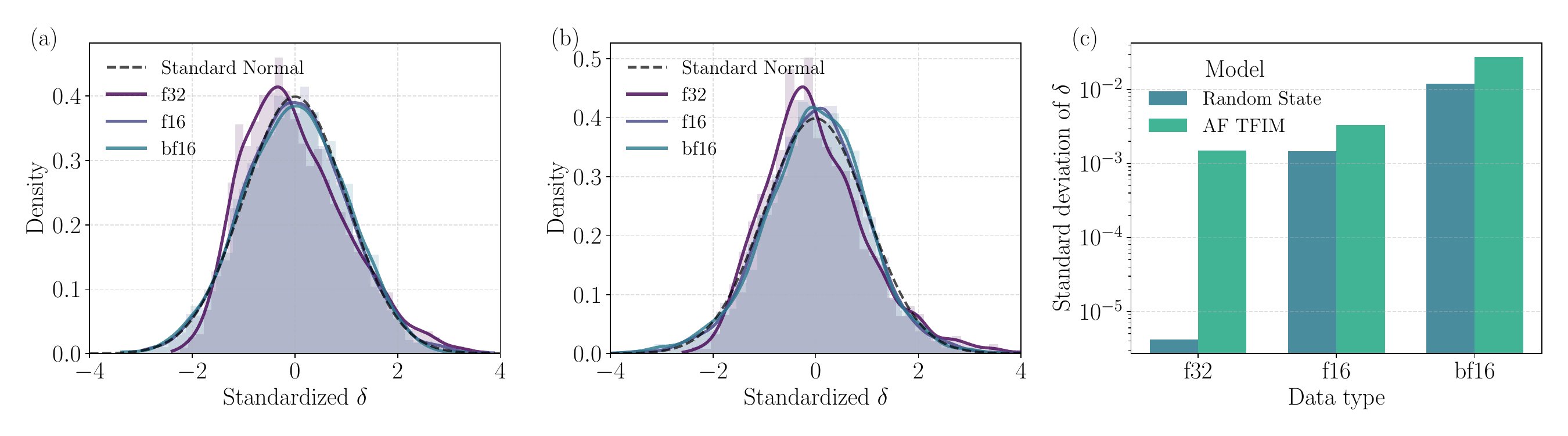}
    \caption{Standardized law of $\delta$ for a random state (panel (a)) and for the ground state of the TFIM in the antiferromagnetic phase at $h/J = 0.5$ (panel (b)). The distributions are computed for different data types on a spin chain of length $N = 12$, using all configurations of the Hilbert space as defined in Eq.~\ref{eq:pert_chain}. Panel (c) shows the standard deviation of each $\delta$ distribution of the states in panels (a) and (b).}
    \label{fig: epsilon distribution}
\end{figure*}

The proof follows by evaluating the Gaussian integral in \Cref{lem:varepsbound} and combining the result with \Cref{lem:piPbound}. In the case of centered noise, i.e. $\mu=0$, we obtain
\begin{align}\label{eq:tvmuzero}
    \left\|\tilde \pi -\pi\right\|_{\TV}  & \leq \frac{1-e^{\sigma^2}\operatorname{erfc}\left(\sigma\right)}{(1-r)} .
\end{align}
As $\sigma \to 0$, the right hand side scales as $\sigma^2/(1-r)$, which is tighter than Eq.~\ref{eq:tv_gaussian} whenever $r<1-2\sigma$. For VMC, this means that as long as the MH sampler mixes sufficiently rapidly and the mixed-precision noise $\sigma$ is small, the resulting bias scales as $O(\sigma^2)$, instead of $O(\sigma)$. Our analysis is carried out in high-dimensional state spaces but assumes local MH proposals, so that the perturbation enters only through the log-density increment $\varepsilon(x,y)$ between neighboring states. This locality suggests that any growth of the error with system size is likely to be mild.

\section{Numerical Results}\label{sec:numerical}
To assess the practical relevance of the derived bounds, we conduct a numerical study of how low-precision arithmetic influences VMC.
In the following, we treat double precision (\texttt{f64}) as numerically exact. Moreover, in all optimizations, we employ stochastic reconfiguration to precondition the gradients, which accelerates convergence and yields a more accurate ground-state approximation (see \cref{app: SR} for further details).

\subsection{Noisy Restricted Boltzmann Machine}

As a warmup example, we consider a commonly used neural network ansatz in VMC: the RBM (see Appendix \ref{app: RBM}). To gain precise control over the perturbation scale $\sigma$, we inject Gaussian noise at the output of the model:
\begin{align}
    \log \tilde\psi_{\theta} (x)
    &= \log\psi_\theta(x)+ \zeta(x)/2
\end{align}
with $\zeta(x) \sim \mathcal{N}(0,\sigma)$. The  factor $1/2$ ensures that resulting log-density difference $\pi_\theta(x)$ has variance $\sigma$.
Crucially, to mimic the type of error introduced by reduced precision, the RBM is designed such that for a given input $x$, the realization $\zeta(x)$ is held fixed across evaluations. In panel (b) of Fig.~\ref{fig:toy}, we study the different bounds in practice. Importantly, we observe that, for small $\sigma$, the Monte Carlo sampling error in the relative difference between expectation values computed from the perturbed and unperturbed distributions exceeds the derived bounds.  Consequently, until $\sigma$ becomes of order unity, the sampling procedure is expected to be effectively insensitive to the perturbation.
As a second step, we train several NQS using an RBM without noise. Specifically, we consider the transverse-field Ising model (TFIM), described by the Hamiltonian  
\begin{equation}\label{eq:TFIM}
\hat H = J\sum_{\langle i,j\rangle} \hat \sigma_i^z \hat \sigma_j^z + h\sum_i \hat\sigma_i^x,
\end{equation}  
where $\hat\sigma^x$ and $\hat\sigma^z$ are Pauli matrices and $J,h \in \mathbb{R}$.
Here we focus on the antiferromagnetic regime, $J>0$. This model captures the competition between an ordering tendency that favors an alternating up–down pattern, driven by the $J$ term, and quantum fluctuations induced by the transverse field $h$, which mix spin configurations and promote superposition states. We use this property of the system to generate a family of NQSs with different structural properties, by obtaining its ground state for various values of the transverse field $h/J$. In particular, for $h/J \to 0$, the squared modulus of the wavefunction is sharply concentrated around the two Néel configurations (antiferromagnetic phase), whereas increasing $h/J$ spreads the probability distribution over many spin configurations, approaching a nearly flat probability distribution for $h/J \to +\infty$ (paramagnetic phase).

Once all NQSs are trained and have converged, we use the learned parameters to initialize the noisy RBMs and perform sampling with different values of $\sigma$. As illustrated in panel (c) of Fig.~\ref{fig:toy}, this procedure allows us to systematically study how the MH acceptance rate depends on the variance of the noise for qualitatively different quantum states that are usually encountered in practice.  At the baseline (dashed lines), different wavefunctions exhibit distinct acceptance rates; sharper probability distributions generally yield lower rates, as the MH algorithm finds fewer states with significant transition probabilities. While $|\Delta\alpha|$ vanishes for small $\sigma$, it approaches $\alpha$ as the noise increases, indicating complete suppression of the acceptance rate. Moreover, once this regime is reached, the derived bound is violated, indicating that the ergodicity assumption no longer holds. We can therefore identify two distinct regimes. In the small-$\sigma$ regime, wavefunctions trained on systems with larger values of $h/J$, thus with a broader shape, are affected by noise at smaller $\sigma$, confirming 
\Cref{lem:varepsbound}. On the other hand, in the large-$\sigma$ regime, the perturbed distribution becomes concentrated around a small number of randomly selected configurations, effectively obscuring the underlying true distribution. Here, the perturbation is sufficiently strong to inhibit thermalization of the Markov chains, resulting in a loss of ergodicity. Notably, for $h/J < 1$, $\tilde\alpha$ exhibits a slight increase before eventually dropping to zero. This bump arises as, for a narrow window of noise strengths, the noise occasionally generates a small number of additional configurations whose probabilities become comparable to that of the true underlying one, temporarily boosting the acceptance rate. However, once $\sigma$ becomes large enough, the acceptance rate falls again, as these noise-induced configurations overwhelmingly dominate over the true state. This effect appears only in the regime $h/J \simeq 0$, where the underlying true distribution is already sharply peaked, making the brief increase in competing configurations sufficiently pronounced to produce a visible change in the acceptance rate.  

\subsection{Practical example in low precision}\label{sec: practical example}

\begin{figure}[t]
    \centering
    \includegraphics[width=1\linewidth]{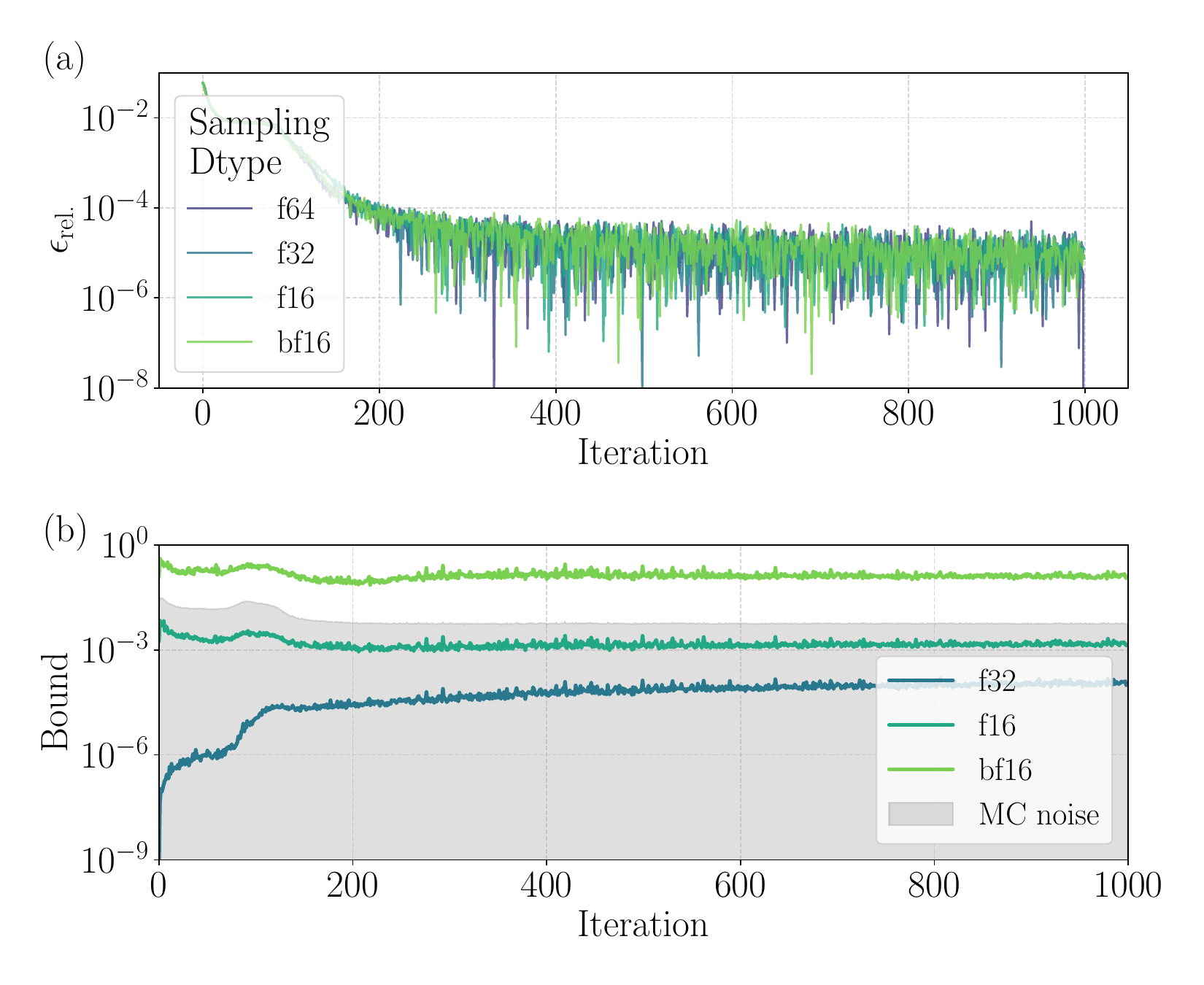}
    \caption{Panel (a) shows the relative energy error with respect to the double-precision reference solution as a function of training step. Optimization is performed using the mixed-precision scheme introduced in this work, with sampling carried out in reduced precision. We use a ResCNN with kernel size $(3,3)$, four residual blocks, and 16 filters to study the two-dimensional TFIM at the critical point with linear system size $L=10$ ($N=L^2$). Panel (b) reports, for the same optimization, the minimum between the KL bound in Eq.~\ref{eq:tv_gaussian} and the bound in Eq.~\ref{eq:avg_bound}, evaluated on the absolute difference between gradients from perturbed and unperturbed distributions. Shaded regions indicate the Monte Carlo confidence interval, given by $3\sqrt{2}\sqrt{\mathrm{Var}[\nabla \epsilon_\theta(x)]/N_{\mathrm{samples}}}$, where $\sqrt{2}$ accounts for error propagation in the gradient difference and the factor of 3 corresponds to a $99.7\%$ Gaussian confidence level.}
    \label{fig: loss and noise}
\end{figure}

\begin{figure*}[htb!]
    \centering
    \includegraphics[width=.95\linewidth]{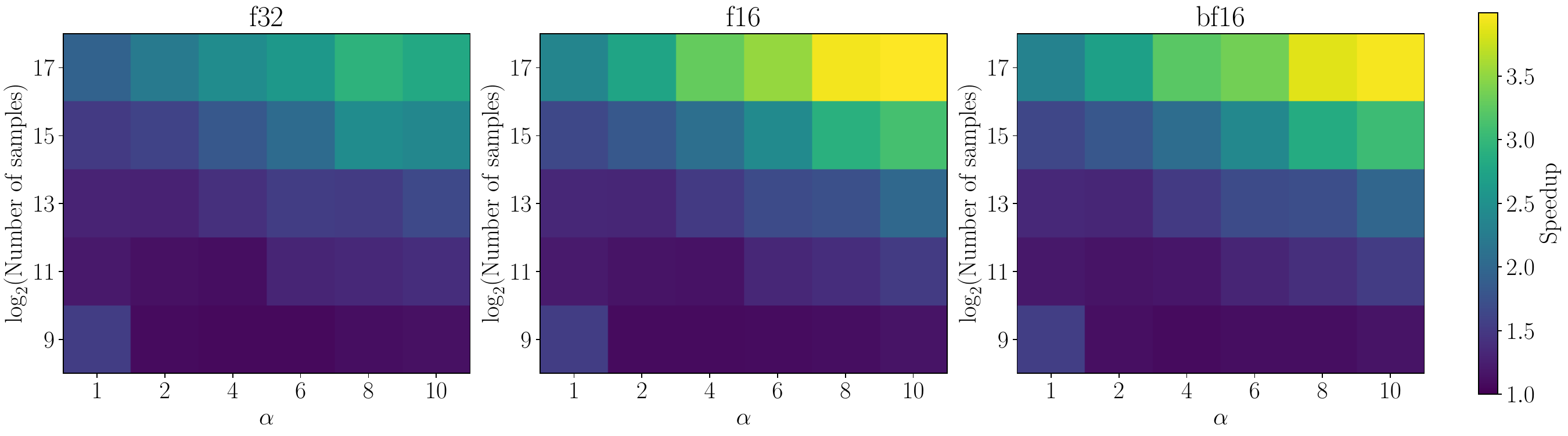}
    \caption{Sampling speedup obtained on a NVIDIA H100 GPU with RBMs with different data types. Results are shown as a function of both the number of samples $N_s$, with the number of Markov chains set to $N_s/4$, and the density of parameters $\alpha$.}
    \label{fig: speedup grid}
\end{figure*}
Before performing full variational VMC calculations with reduced-precision sampling, we examine the distribution of $\delta$ for two representative wavefunctions. This allows us to empirically verify the validity of the Gaussian approximation and determine whether $\sigma$ remains sufficiently small within the context of our model.
 We consider a randomly initialized state, corresponding to a flat distribution, and the ground state of the TFIM in the antiferromagnetic phase. As illustrated in Fig.~\ref{fig: epsilon distribution}, the errors at the output of the different models approximately follow a Gaussian distribution. To corroborate this observation, we perform the Shapiro–Wilk test \cite{shapiro1965analysis}, obtaining statistics $\geq 0.97$ for all distributions, which confirms near-normality. More importantly, the variance of $\delta$ is consistently smaller than $1$, indicating that MCMC should be able to converge reliably to the underlying distribution despite the reduced precision. In principle,  $\sigma$ itself could depend on the size of $\mathcal{X}$; however, we find no such dependence empirically (see \Cref{app:system_size}).

To test this hypothesis, we design a dedicated optimization scheme that enables sampling in reduced precision while ensuring stable training. The key idea is to maintain two separate copies of the model throughout the training process. The first copy operates entirely in double precision and is responsible for both the forward and backward passes, ensuring accurate gradient computation and parameter updates. At each iteration, once the parameters are updated in this high-precision model, they are downcast to the target lower precision and transferred to the second copy of the model. This low-precision model is then used exclusively for sampling new configurations   according to the probability distribution defined by the squared modulus wavefunction.

To assess the applicability of our proposed optimization scheme, we turn to a more challenging problem. We take the two dimensional TFIM for a square lattice with linear dimension $L=10$ so $N=L^2$, and approximate the ground state at the critical point \cite{blote_cluster_2002}, where the state is much more challenging to learn. For this task we adopt a ResCNN \cite{barton_interpretable_2025} architecture  with $4$ residual blocks, $16$ filters and a $(3\times3)$ kernel (see Appendix \ref{app:ResCNN speedup}). Throughout the optimization, we compare the resulting energy to that obtained from a previously trained double-precision optimization. As shown in Fig.~\ref{fig: loss and noise} (a), the relative error as a function of training steps is nearly identical across all data types, providing further evidence that sampling in lower precision does not compromise the final accuracy of the VMC calculation. To connect the theoretical error bounds with their practical impact during optimization, we perform an additional experiment. Since gradients are estimated via MC sampling, they are affected by both stochastic sampling noise and errors due to reduced numerical precision. A sufficient, though not necessary, condition for stable optimization is that the MC noise dominates, or is comparable to, the low-precision error. Accordingly, at each optimization step we measure the MC gradient noise and the standard deviation $\sigma$ of the low-precision error, which enters the derived upper bounds on $\lvert\expect[\pi]{\nabla \epsilon_\theta} - \expect[\tilde{\pi}]{\nabla \epsilon_\theta} \rvert$, where $\epsilon_\theta(x)$ is defined in Eq.~\ref{eq:loss}. As shown in Fig.~\ref{fig: loss and noise} (b), the MC noise exceeds the bound throughout training when using single precision and \texttt{f16}, explaining the unchanged optimization dynamics. This clearly explains why the optimization dynamics are not affected in these cases. By contrast, this separation is not observed for  \texttt{bf16}. Nevertheless, as discussed previously, this does not imply that the optimization must fail: the gradients can tolerate a wider range of noise levels before the optimization process is effectively compromised. However, since the speedups observed for \texttt{f16} and \texttt{bf16} are comparable, one can safely use \texttt{f16} for sampling without affecting the optimization. In \cref{app:gradients noise}, we repeat the experiment for the antiferromagnetic and paramagnetic phases of the TFIM, and in \Cref{app:heisenberg} for the Heisenberg model. In all cases, we find that $\sigma$ remains sufficiently small so as not to compromise training.

Finally, we analyze the speedup achieved by sampling with an RBM as a function of both the number of parameters and the number of samples, while fixing the number of samples per chain to $4$. The results presented in Fig.~\ref{fig: speedup grid} demonstrate a clear trend: the achieved speedup systematically increases with both the number of samples and the number of model parameters. The mechanisms underlying these two dependencies, however, are distinct. In the first case, increasing the number of samples effectively enlarges the batch of configurations that the model processes simultaneously at each iteration. This higher workload per call allows the GPU to operate closer to its memory capacity, thereby improving hardware utilization and maximizing the achievable throughput. In contrast, increasing the number of parameters affects the computational load rather than the memory saturation. A larger parameter space increases the number of floating-point operations required whenever the model is evaluated. Since modern GPUs are designed to perform massive numbers of such operations in parallel, scaling up the parameter count of the neural network drives the computation further into the regime where the GPU’s arithmetic units can be fully leveraged. In \cref{app:ResCNN speedup} we present the same experiment carried out with the ResCNN, where we also find large speedups for the sampling. 

\section{Conclusion}

In this work, we derived theoretical bounds on the bias introduced by reduced-precision MH MCMC and demonstrated their relevance in a scientific application to NQS. For typical VMC optimizations, we find that the resulting sampling bias is small enough to leave convergence to the ground-state minimum essentially unchanged, while delivering substantial speedups. Our analysis in \Cref{thm:tv_markov} relies on a particular noise model and proposal distribution. Extending the same approach to richer proposal families and more complex MCMC dynamics should be feasible, and we expect similar tools to carry over with only modest additional effort. Although mixed-precision sampling can provide meaningful gains in large-scale simulations, even larger improvements may come from accelerating the computation of local energies (cf. Eq.~\ref{eq:loss}) and the VMC preconditioning step. We briefly explore this direction in \Cref{app:vmc} and highlight several practical challenges that arise.
Mixed-precision linear algebra has advanced rapidly in recent years~\cite{higham2022mixed}, including mixed-precision iterative solvers~\cite{haidar2018harnessing}, and these developments are promising for VMC workflows. We tested whether such solvers could substantially accelerate the preconditioning step, and observed notable benefits only for very large systems (matrices larger than $2^{15}\times 2^{15}$). To support further exploration, we provide a JAX interface at~\cite{irgesv2025github}.

In conclusion, while mixed precision is now standard practice in machine learning, our results suggest it remains an underexplored, and potentially high-impact, tool in MCMC-driven scientific computing, especially as more workflows increasingly depend on GPU hardware.

\section{Code and Data Availability}
All code used to generate the data presented in this work is publicly available at \url{https://github.com/msolinas28/Neural_Quantum_States_in_Mixed_Precision}. Our implementations are based on NetKet \cite{vicentini2022netket} and JAX \cite{jax2018github}.

\section{Acknowledgements}
RW and AV acknowledge support from the Flatiron Institute. The Flatiron Institute is a division of the Simons Foundation. 

\bibliographystyle{apsrev4-2}
\bibliography{library}

\clearpage
\onecolumngrid
\appendix

\section{Data types}

\begin{figure}[htb!]
    \centering
    \includegraphics[width=0.5\linewidth]{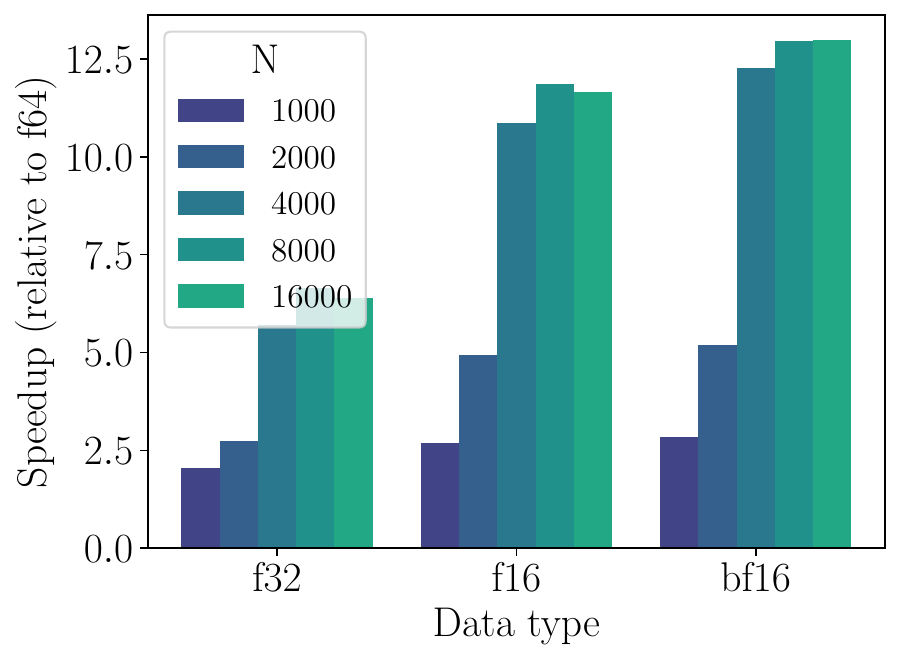}
    \caption{Speedup over different data types relative to double precision, obtained by multiplying two matrices of shape $(N, N)$ on an NVIDIA H100 GPU. Each result is obtained by averaging over 50 matrix multiplications.}
    \label{fig: gpu_comparison}
\end{figure}

\begin{table}[htb!]
    \centering
    \begin{tabular}{|c|c|c|c|c|}
        Name & Exp. & Signif. & Range & Error \\\hline
        \texttt{bf16} & 8 & 7 & $10^{\pm 38}$ & $3.91 \times 10^{-3}$ \\
        \texttt{f16} & 5 & 10 & $10^{\pm 5}$ & $4.88 \times 10^{-5}$ \\
        \texttt{f32} & 8 & 23 & $10^{\pm 38}$ & $5.96 \times 10^{-8}$  \\
        \texttt{f64} & 11 & 52 &$10^{\pm 308}$ & $1.11 \times 10^{-16}$   \\
    \end{tabular}
    \caption{Standard numerical precision formats. A binary floating number is represented by a sign, an exponent and a significant. The exponent determines the dynamic range of representable values, while the significant governs the resolution, how many numbers be represented within the range. Due to the larger exponent of \texttt{bf16} it can represent numbers in a larger range, although this comes at the cost of numerical precision, which is an order of magnitude lower than \texttt{f16}. See \cite{IEEE2019} and \cite{bfloat16} for more detailed descriptions.}
    \label{tab:precision}
\end{table}

\section{Mixing times and spectral properties of the Markov kernel}\label{app:spectral}
A Markov chain with transition kernel $P$ on a finite state space $\mathcal{X}$ is \emph{irreducible} if
\begin{align*}
    \forall x,y\in\mathcal X\ \exists t\in\mathbb N \ \text{s.t.}\ P^t(x,y)>0.
\end{align*}
The period of the chain is given by
\begin{align*}
    d(x)\;:=\;\gcd\{\,t\ge 1:\ P^t(x, x)>0\,\}.
\end{align*}
The chain is \emph{aperiodic} if $d(x)=1$~\cite{LevinPeresWilmer2017}. A central result in Markov chain theory states that if the chain is irreducible and aperiodic, there exist constants $\lambda \in (0, 1)$ and $C > 0 $ such that 
\begin{align}\label{eq:contract}
    \max_{x\in \mathcal{X}} \left\| P^t(x,\cdot) - \pi \right\|_{\TV} \leq C \lambda^t, \quad t\in\mathbb{N} \;.
\end{align}
This inequality expresses \emph{uniform geometric ergodicity}, i.e., the
distribution of the chain converges exponentially fast to its stationary
distribution $\pi$, uniformly over all initial states \cite{meyn2012markov,DoucMoulinesPriouretSoulier}.

The uniform geometric ergodicity bound Eq.~\ref{eq:contract} naturally leads
to the definition of the \emph{mixing time}.  For a prescribed accuracy
$\varepsilon\in(0,1)$, we define
\[
t_{\mathrm{mix}}(\varepsilon)
:=\min\Bigl\{t\ge 0:
\max_{x\in\mathcal X}\|P^t(x,\cdot)-\pi\|_{\mathrm{TV}}\le\varepsilon
\Bigr\}.
\]
By Eq.~\ref{eq:contract}, the mixing time is finite and grows at most
logarithmically in the desired accuracy $\varepsilon^{-1}$.

Because $\mathcal X$ is finite and the MH chain is reversible with respect to
$\pi$, the convergence behavior of $P$ admits a precise spectral
characterization.  
In particular, the
Markov operator $P$ is self-adjoint on $\ell^2(\pi)$ and has
an orthonormal eigenbasis with real eigenvalues
\[
1=\lambda_1>\lambda_2\ge \cdots \ge \lambda_{|\mathcal X|}>-1.
\]
The $\ell^2(\pi)$ contraction on centered functions is governed by
\[
\rho := \max\{|\lambda_2|,\ |\lambda_{|\mathcal X|}|\}\in(0,1),
\]
in the sense that $\|P^t f\|_{\pi}\le \rho^t\|f\|_{\pi}$ when
$\mathbb E[f(x)]_{x \sim \pi} =0$.
A standard $\ell^2$-to-total-variation comparison then yields
\begin{equation} \label{eq:l2_to_tv}
\max_{x\in\mathcal X}\|P^t(x,\cdot)-\pi\|_{\mathrm{TV}}
\le \frac12\sqrt{\pi_{\min}^{-1}-1}\,\rho^t.
\end{equation}
where $\pi_{\min}:=\min_{x\in\mathcal X}\pi(x)$.  

Comparing Eq.~\ref{eq:l2_to_tv} with Eq.~\ref{eq:contract}, we see that the
geometric convergence rate may be taken as $\lambda=\rho$.  Defining the
\emph{absolute spectral gap} by
\[
\gamma := 1-\rho,
\]
its inverse
\[
t_{\mathrm{rel}} := \gamma^{-1}
\]
is known as the \emph{relaxation time} of the Markov chain
\cite{LevinPeresWilmer2017}. Combining
Eq.~\ref{eq:l2_to_tv} with the definition of the mixing time yields the
quantitative bound
\[
t_{\mathrm{mix}}(\varepsilon)
\le t_{\mathrm{rel}}
\Bigl(\log\frac{1}{\varepsilon}
      +\frac12\log\frac{1}{\pi_{\min}}\Bigr),
\qquad \varepsilon\in(0,1).
\]
Thus, the existence of a strictly positive spectral gap implies uniform
geometric ergodicity, and up to logarithmic factors, the relaxation time
$t_{\mathrm{rel}}$ controls the mixing time of the chain.

While spectral quantities such as $\gamma$ provide a sharp
description of convergence in the reversible setting, they
are often difficult to compute explicitly.  In practice, it is therefore
useful to bound the spectral gap from below using stronger, more readily
verifiable conditions on the Markov kernel such as a Doeblin-type minorization \cite{DoucMoulinesPriouretSoulier}.
Assume that $\exists\varepsilon\in(0,1)$ and a probability measure $\nu$ on $\mathcal{X}$
such that
\[
P(x,\cdot)\;\ge\;\varepsilon\,\nu(\cdot)
\qquad \text{for all } x\in\mathcal X,
\]
then
\begin{equation}\label{eq:app:tv_contr_doeb}
\|\mu P-\mu'P\|_{\TV}
\;\le\;
r \,\|\mu-\mu'\|_{\TV}
\end{equation}
for all probability measures $\mu$ and $\mu'$. Hence $P$ is a strict contraction in TV with  constant $r := 1 - \varepsilon$.

On a finite state space,  the TV contractivity in Eq.~\ref{eq:app:tv_contr_doeb} implies that
all nontrivial eigenvalues of $P$ have modulus at most $1-\varepsilon$,
and hence that $P$ admits a strictly positive absolute spectral gap
$\gamma \ge \varepsilon$.  In this way, Doeblin’s condition provides an
explicit lower bound on the absolute spectral gap without requiring
direct spectral analysis. 
\section{Proofs}\label{app:proofs}
For convenience, we restate the Lemmas and Theorems in this section.
\begin{lemma*}[\ref{lem:piPbound}]
Let $\pi$ and $\tilde\pi$ be probability measures on 
$\mathcal{X}$. Let $P$ and $\tilde P$ be Markov kernels with invariant distributions $\pi$ and $\tilde\pi$,
respectively. Assume that $P$ is a strict contraction in TV with constant $r\in(0,1)$
in the sense of Eq.~\ref{eq:tv_contr_doeb}.
Then
    \begin{align*}
        \left\|\tilde \pi -\pi\right\|_{\TV}  & \leq \left\|\tilde \pi(\tilde P - P)  \right\|_{\TV}\frac{1}{(1-r)}.
    \end{align*}
\end{lemma*}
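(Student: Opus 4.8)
The plan is to exploit the two invariance relations $\pi P = \pi$ and $\tilde\pi\tilde P = \tilde\pi$ to rewrite the difference $\tilde\pi - \pi$ in a self-referential form, and then close the resulting recursion using the contraction property of $P$.

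First I would write $\tilde\pi - \pi = \tilde\pi\tilde P - \pi P$, substituting both invariance relations at once. The key algebraic step is to insert the mixed term $\tilde\pi P$: adding and subtracting it gives
\[
\tilde\pi - \pi = \tilde\pi(\tilde P - P) + (\tilde\pi - \pi)P.
\]
This decomposition cleanly separates the perturbation of the kernel (the first term, which is exactly what the bound's right-hand side measures) from a contracted copy of the very quantity we wish to bound (the second term).

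Next I would apply the triangle inequality for the TV norm and invoke the contraction hypothesis of Eq.~\ref{eq:tv_contr_doeb}. Since both $\tilde\pi$ and $\pi$ are probability measures, the contraction applies directly to their difference, giving $\|(\tilde\pi - \pi)P\|_{\TV} = \|\tilde\pi P - \pi P\|_{\TV} \le r\,\|\tilde\pi - \pi\|_{\TV}$. Combined with the triangle inequality this yields
\[
\|\tilde\pi - \pi\|_{\TV} \le \|\tilde\pi(\tilde P - P)\|_{\TV} + r\,\|\tilde\pi - \pi\|_{\TV}.
\]

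Finally I would rearrange, moving the term $r\,\|\tilde\pi - \pi\|_{\TV}$ to the left-hand side. Because $r \in (0,1)$, the factor $1-r$ is strictly positive, so dividing through is legitimate and produces the claimed bound. The argument involves no real obstacle beyond spotting the add-and-subtract trick; the only points requiring care are to confirm that the contraction in Eq.~\ref{eq:tv_contr_doeb}, stated for an arbitrary pair of probability measures $\mu,\mu'$, indeed applies to the pair $(\tilde\pi,\pi)$ --- which it does, since both are genuine probability distributions on $\mathcal{X}$ --- and to note that it is precisely the existence of an invariant $\tilde\pi$ for $\tilde P$ that makes the self-referential identity available in the first place.
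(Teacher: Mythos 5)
Your proof is correct and matches the paper's own argument essentially line for line: the same add-and-subtract decomposition $\tilde\pi-\pi=\tilde\pi(\tilde P-P)+(\tilde\pi-\pi)P$ via the two invariance relations, the same use of the triangle inequality and the TV contraction applied to the pair $(\tilde\pi,\pi)$, and the same rearrangement using $1-r>0$. Nothing is missing.
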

\begin{proof}[Proof of Lemma~\ref{lem:piPbound}]

Using the invariance relations $\pi P=\pi$ and $\tilde\pi\tilde P=\tilde\pi$, we write
\[
\tilde\pi-\pi
= \tilde\pi\tilde P-\pi P
= \tilde\pi(\tilde P-P)+(\tilde\pi-\pi)P .
\]
Taking total variation norms and applying the triangle inequality yields
\[
\|\tilde\pi-\pi\|_{\TV}
\le \bigl\|\tilde\pi(\tilde P-P)\bigr\|_{\TV}
     + \|(\tilde\pi-\pi)P\|_{\TV}.
\]
Since $P$ is a strict contraction in total variation with constant $r\in(0,1)$,
the second term is bounded by $r\|\tilde\pi-\pi\|_{\TV}$. Rearranging yields
\[
\|\tilde\pi-\pi\|_{\TV}
\le \frac{1}{1-r}\,\bigl\|\tilde\pi(\tilde P-P)\bigr\|_{\TV},
\]
which completes the proof.
\end{proof}

\begin{lemma*}[\ref{lem:varepsbound}]
Let $\mathcal X$ be a finite state space and let $P$ and $\tilde P$ be
MH kernels on $\mathcal X$ constructed with the same proposal
distribution $q(y\mid x)$ and acceptance probabilities
\begin{align*}
\alpha(x,y)&=\min\{1,s(x,y)\}, \quad \text{and} \\
\tilde\alpha(x,y)&=\min\{1,s(x,y)e^{\varepsilon(x,y)}\}.
\end{align*}
Then, for any probability measure $\tilde\pi$ on $\mathcal X$,
\[
\bigl\|\tilde \pi(\tilde P - P)\bigr\|_{\TV}
\;\le\;
1-\sum_{y\in\mathcal X}
\mathbb{E}_{x\sim\tilde\pi}\!\left[q(y\mid x)e^{-|\varepsilon(x,y)|}\right].
\]
\end{lemma*}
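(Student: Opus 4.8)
The plan is to reduce the kernel-difference norm to a row-wise quantity and then exploit the explicit form of the Metropolis--Hastings acceptance function. Writing $\nu := \tilde\pi(\tilde P - P)$, which is a signed measure of total mass zero, I would start from $\|\nu\|_{\TV} = \tfrac12\sum_y |\nu(y)|$ and push the absolute value through the sum over the source state $x$ via the triangle inequality, obtaining $\|\nu\|_{\TV} \le \tfrac12 \sum_x \tilde\pi(x) \sum_y |\Delta P(x,y)|$, where $\Delta P = \tilde P - P$ is given by Eq.~\ref{eq:Pdiff}.

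Next I would bound the inner row-$\ell^1$ norm $\sum_y |\Delta P(x,y)|$. Splitting off the diagonal, the off-diagonal entries contribute $\sum_{y\neq x} q(y\mid x)|\Delta\alpha(x,y)|$, while the diagonal entry equals $-\sum_{z\neq x} q(z\mid x)\Delta\alpha(x,z)$ by conservation of probability; applying the triangle inequality to this single term bounds its modulus by the same off-diagonal sum. Hence $\sum_y |\Delta P(x,y)| \le 2\sum_{y\neq x} q(y\mid x)|\Delta\alpha(x,y)|$, and the factor of two cancels the $\tfrac12$ from the TV norm, leaving $\|\nu\|_{\TV} \le \sum_x \tilde\pi(x)\sum_{y\neq x} q(y\mid x)|\Delta\alpha(x,y)|$.

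The crux is the pointwise estimate $|\Delta\alpha(x,y)| \le 1 - e^{-|\varepsilon(x,y)|}$ for $x\neq y$, which I would establish by a case analysis on the acceptance function $\min\{1,\cdot\}$. Fixing $s = s(x,y)$ and $\varepsilon=\varepsilon(x,y)$ and first taking $\varepsilon\ge 0$ (so $se^{\varepsilon}\ge s$ and $\Delta\alpha\ge 0$), the three orderings of $1$, $s$, $se^{\varepsilon}$ give respectively $\Delta\alpha = s(e^{\varepsilon}-1)\le e^{-\varepsilon}(e^{\varepsilon}-1)=1-e^{-\varepsilon}$ when $se^{\varepsilon}\le 1$, $\Delta\alpha = 1-s \le 1-e^{-\varepsilon}$ when $s\le 1< se^{\varepsilon}$, and $\Delta\alpha=0$ when $s\ge 1$; each is at most $1-e^{-|\varepsilon|}$. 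The case $\varepsilon<0$ follows by the symmetric substitution $s'=se^{\varepsilon}$, $\varepsilon'=-\varepsilon$, which maps it onto the $\varepsilon\ge 0$ case. I expect this case analysis to be the main (though elementary) obstacle, since it requires carefully tracking which of the two acceptance probabilities is clipped at $1$.

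Finally I would substitute this estimate and rearrange into the claimed closed form. Using $\sum_{y\neq x} q(y\mid x) = 1 - q(x\mid x)$ together with the key observation that $\varepsilon(x,x)=\delta(x)-\delta(x)=0$, so that $e^{-|\varepsilon(x,x)|}=1$, the diagonal term $y=x$ can be folded back into the sum at no cost, turning $\sum_{y\neq x} q(y\mid x)\bigl(1-e^{-|\varepsilon(x,y)|}\bigr)$ into $1 - \sum_{y} q(y\mid x)e^{-|\varepsilon(x,y)|}$. Taking the expectation over $x\sim\tilde\pi$ and exchanging the order of summation then yields the stated bound.
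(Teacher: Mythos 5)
Your proposal is correct and follows essentially the same route as the paper's proof: the same reduction of $\|\tilde\pi(\tilde P-P)\|_{\TV}$ to a row-wise $\ell^1$ bound with the diagonal handled by conservation of probability, the same case analysis yielding $|\Delta\alpha(x,y)|\le 1-e^{-|\varepsilon(x,y)|}$, and the same final rearrangement using $\sum_y q(y\mid x)=1$. Your observation that $\varepsilon(x,x)=0$ lets the diagonal term be folded in exactly is a slightly sharper justification than the paper's (which only needs that the added term is harmless), but the argument is otherwise identical.
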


\begin{proof}[Proof of Lemma~\ref{lem:varepsbound}]
Assume that $P$ and $\tilde P$ are Metropolis--Hastings transition kernels on the
finite state space $\mathcal X$, constructed from a common proposal
distribution $q(\cdot\mid\cdot)$ and acceptance probabilities
\[
\alpha(x,y)=\min\{1,s(x,y)\},
\qquad
\tilde\alpha(x,y)=\min\{1,s(x,y)e^{\varepsilon(x,y)}\}.
\]
For $y\neq x$,
\[
P(x,y)=q(y\mid x)\alpha(x,y),
\qquad
\tilde P(x,y)=q(y\mid x)\tilde\alpha(x,y),
\]
and the diagonal entries satisfy
\[
P(x,x)=1-\sum_{z\neq x} q(z\mid x)\alpha(x,z),
\qquad
\tilde P(x,x)=1-\sum_{z\neq x} q(z\mid x)\tilde\alpha(x,z).
\]
Define $\Delta\alpha(x,y):=\tilde\alpha(x,y)-\alpha(x,y)$.

\paragraph{Step 1: Reduction to a row-wise kernel difference.}
Let $\eta:=\tilde\pi(\tilde P-P)$, so that
\[
\eta(y)=\sum_{x\in\mathcal X}\tilde\pi(x)\bigl(\tilde P(x,y)-P(x,y)\bigr)
      =\mathbb E_{x\sim\tilde\pi}\bigl[\tilde P(x,y)-P(x,y)\bigr].
\]
Since $\mathcal X$ is finite,
\[
\|\eta\|_{\TV}=\frac12\sum_{y\in\mathcal X}|\eta(y)|.
\]
By the triangle inequality,
\begin{align}\label{eq:step1}
\|\tilde\pi(\tilde P-P)\|_{\TV}
&=\frac12\sum_{y\in\mathcal X}
\left|\mathbb E_{x\sim\tilde\pi}\bigl[\tilde P(x,y)-P(x,y)\bigr]\right|  \le
\frac12\,\mathbb E_{x\sim\tilde\pi}
\left[\sum_{y\in\mathcal X}\bigl|\tilde P(x,y)-P(x,y)\bigr|\right].
\end{align}

\paragraph{Step 2: Bounding the row-wise difference.}
Fix $x\in\mathcal X$. For $y\neq x$,
\[
|\tilde P(x,y)-P(x,y)|
=
q(y\mid x)\,|\Delta\alpha(x,y)|.
\]
For the diagonal term,
\[
\tilde P(x,x)-P(x,x)
=
-\sum_{z\neq x} q(z\mid x)\Delta\alpha(x,z),
\]
and hence
\[
|\tilde P(x,x)-P(x,x)|
\le
\sum_{z\neq x} q(z\mid x)\,|\Delta\alpha(x,z)|.
\]
Therefore, the row-wise $\ell^1$-distance between the two kernels starting from $x$ is given by:
\begin{align*}
\sum_{y\in\mathcal X}|\tilde P(x,y)-P(x,y)|
&\le
\sum_{z\neq x} q(z\mid x)\,|\Delta\alpha(x,z)|
+\sum_{y\neq x} q(y\mid x)\,|\Delta\alpha(x,y)| =
2\sum_{y\neq x} q(y\mid x)\,|\Delta\alpha(x,y)|.
\end{align*}
Substituting into Eq.~\ref{eq:step1} yields
\begin{equation}\label{eq:tv_reduce}
\|\tilde\pi(\tilde P-P)\|_{\TV}
\le
\mathbb E_{x\sim\tilde\pi}
\left[\sum_{y\neq x} q(y\mid x)\,|\Delta\alpha(x,y)|\right].
\end{equation}

\paragraph{Step 3: Bounding the acceptance error.}
Fix $x\neq y$ and write $s=s(x,y)>0$ and $\varepsilon=\varepsilon(x,y)$.
We claim that
\begin{align}\label{eq:step3}
|\Delta\alpha(x,y)|  =
\bigl|\min\{1,s e^{\varepsilon}\}-\min\{1,s \}\bigr|\le 1-e^{-|\varepsilon|}.
\end{align}
Define
\[
a:=\min\{1,s\},
\qquad
b:=\min\{1,se^{\varepsilon}\},
\]
so that $|\Delta\alpha(x,y)|=|b-a|$.

It suffices to consider $\varepsilon\ge 0$, since the case $\varepsilon<0$
follows by symmetry after replacing $\varepsilon$ with $|\varepsilon|$.
Assume $\varepsilon\ge 0$.

If $s\ge 1$, then $a=1$ and $se^{\varepsilon}\ge s\ge 1$, hence $b=1$ and
$|b-a|=0\le 1-e^{-\varepsilon}$.

If $s<1$, then $a=s$.
If $se^{\varepsilon}\le 1$ (equivalently $s\le e^{-\varepsilon}$), then
$b=se^{\varepsilon}$ and
\[
|b-a|
=
s(e^{\varepsilon}-1)
\le
e^{-\varepsilon}(e^{\varepsilon}-1)
=
1-e^{-\varepsilon}.
\]
If instead $se^{\varepsilon}>1$ (equivalently $s>e^{-\varepsilon}$), then $b=1$ and
\[
|b-a|
=
1-s
<
1-e^{-\varepsilon}.
\]
Thus $|\Delta\alpha(x,y)|\le 1-e^{-|\varepsilon(x,y)|}$ in all cases.

\paragraph{Step 4: Conclusion.}
Combining Eq.~\ref{eq:tv_reduce} with the bound from Eq.~\ref{eq:step3},
\begin{align*}
\|\tilde\pi(\tilde P-P)\|_{\TV}
&\le
\mathbb E_{x\sim\tilde\pi}
\left[\sum_{y\neq x} q(y\mid x)\bigl(1-e^{-|\varepsilon(x,y)|}\bigr)\right] \le
\mathbb E_{x\sim\tilde\pi}
\left[\sum_{y\in\mathcal X} q(y\mid x)\bigl(1-e^{-|\varepsilon(x,y)|}\bigr)\right],
\end{align*}
where the $y=x$ term may be added without effect.
Since $\sum_{y\in\mathcal X} q(y\mid x)=1$, we obtain
\[
\|\tilde\pi(\tilde P-P)\|_{\TV}
\le
1-\sum_{y\in\mathcal X}
\mathbb E_{x\sim\tilde\pi}
\left[q(y\mid x)e^{-|\varepsilon(x,y)|}\right],
\]
which completes the proof.
\end{proof}

\begin{theorem*}[~\ref{thm:tv_markov}]
    Let $\pi$ and $\tilde\pi$ be probability measures on the discrete space $\mathcal{X} = \{0,1\}^n$  related by a function $\delta : \mathcal{X} \to \mathbb{R}$ such that, for all $x \in \mathcal{X}$,
    \begin{align*}
        \log \tilde\pi(x) = \log \pi(x) + \delta(x).
    \end{align*}
    Let $P$ and $\tilde P$ denote the MH Markov chains targeting
$\pi$ and $\tilde\pi$, respectively, constructed using the proposal
distribution:
\begin{equation} \label{eq:app:local_prop}
    \begin{aligned}
        q(y \mid x)=\begin{cases}
        \frac{1}{n}, & \text{if } y\in\{y_i(x)\}_{i=1}^n,\\
        0, & \text{otherwise,}
    \end{cases}
    \end{aligned}
    \end{equation}
    where $y_i(x)$ is obtained from $x$ by flipping its $i$th bit.
Assume that $P$ is a strict contraction in total variation with constant
$r\in(0,1)$ in the sense of Eq.~\ref{eq:tv_contr_doeb}. Assume further that, under the joint draw $X\sim\tilde\pi$ and
$Y\sim q(\cdot\mid X)$, the log-density increment
\[
    \varepsilon(X,Y)\coloneqq \delta(Y)-\delta(X)
\]
is Gaussian with mean $\mu$ and variance $2\sigma^2$. Then,    
\begin{align}
        \left\|\tilde \pi -\pi\right\|_{\TV} &\leq \frac{1}{1-r}
        \bigg(1 - \frac{e^{\sigma^2}}{2}\bigg[
        e^{-\mu} \operatorname{erfc}\left(\sigma-\frac{\mu}{2\sigma}\right)
        \nonumber+ e^{\mu} \operatorname{erfc}\left(\sigma+\frac{\mu}{2\sigma}\right)
        \bigg]\bigg).
    \end{align}
    
\end{theorem*}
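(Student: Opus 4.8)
The plan is to obtain the bound by chaining the two general kernel-perturbation results already established and then collapsing the remaining quantity to a single one-dimensional Gaussian integral. Concretely, applying \Cref{lem:piPbound} to the pair $(P,\tilde P)$ with contraction constant $r$ gives
\[
\|\tilde\pi-\pi\|_{\TV}\le\frac{1}{1-r}\,\bigl\|\tilde\pi(\tilde P-P)\bigr\|_{\TV},
\]
so it suffices to control $\|\tilde\pi(\tilde P-P)\|_{\TV}$. For this I would invoke \Cref{lem:varepsbound} with the choice $\pi'=\tilde\pi$, which bounds this quantity by $1-\sum_{y}\mathbb{E}_{x\sim\tilde\pi}\!\left[q(y\mid x)e^{-|\varepsilon(x,y)|}\right]$ with no distributional assumption on $\varepsilon$ whatsoever.

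The next step converts the double sum into a single expectation matching the hypothesis. Since $q(\cdot\mid x)$ is a genuine probability distribution (here supported on the $n$ single-bit-flip neighbors of $x$), exchanging the sum over $y$ with the expectation over $x$ yields
\[
\sum_{y\in\mathcal{X}}\mathbb{E}_{x\sim\tilde\pi}\!\left[q(y\mid x)e^{-|\varepsilon(x,y)|}\right]
=\mathbb{E}_{X\sim\tilde\pi,\,Y\sim q(\cdot\mid X)}\!\left[e^{-|\varepsilon(X,Y)|}\right].
\]
Locality enters only through the modeling assumption, not this algebraic step: under the stated hypothesis the joint draw makes the right-hand side exactly $\mathbb{E}[e^{-|\varepsilon|}]$ with $\varepsilon\sim\mathcal{N}(\mu,2\sigma^2)$, so the whole problem reduces to evaluating a single Gaussian moment.

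To compute $\mathbb{E}[e^{-|\varepsilon|}]$ I would split the integral at the origin into the regions $\varepsilon\ge0$ and $\varepsilon<0$, where $|\varepsilon|$ is linear, and complete the square in each half-line integral. Writing $\tau^2=2\sigma^2$, on $\{\varepsilon\ge0\}$ the factor $e^{-\varepsilon}$ shifts the Gaussian mean by $-\tau^2$ and produces a prefactor $e^{\tau^2/2-\mu}=e^{\sigma^2-\mu}$ multiplying a normal tail probability that is a complementary error function; symmetrically, $\{\varepsilon<0\}$ yields $e^{\sigma^2+\mu}$ times a second erfc. Carrying the substitution $\tau=\sqrt2\,\sigma$ through the tail probabilities turns the arguments into $\sigma\mp\mu/(2\sigma)$, giving $\mathbb{E}[e^{-|\varepsilon|}]=\tfrac{e^{\sigma^2}}{2}\bigl[e^{-\mu}\operatorname{erfc}(\sigma-\tfrac{\mu}{2\sigma})+e^{\mu}\operatorname{erfc}(\sigma+\tfrac{\mu}{2\sigma})\bigr]$. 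Substituting back into the chained bound produces the claimed inequality.

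The main obstacle is purely the bookkeeping in this Gaussian integral: getting the signs in the completed square right so that the two mean shifts are $\mu\mp\tau^2$, correctly identifying the resulting normal tail probabilities with $\tfrac12\operatorname{erfc}(\cdot)$, and propagating the $\tau^2=2\sigma^2$ rescaling into both the erfc arguments and the exponential prefactors. Everything upstream is an essentially mechanical application of the two lemmas. As a consistency check I would specialize to $\mu=0$, where the two erfc terms coincide and the bound collapses to Eq.~\eqref{eq:tvmuzero}, and verify the small-$\sigma$ behavior $1-e^{\sigma^2}\operatorname{erfc}(\sigma)=O(\sigma^2)$ quoted immediately afterward.
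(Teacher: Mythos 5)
Your proposal is correct and follows essentially the same route as the paper's proof: chain \Cref{lem:piPbound} with \Cref{lem:varepsbound} (taking $\pi'=\tilde\pi$), rewrite the proposal-averaged sum as the joint expectation $\mathbb{E}\bigl[e^{-|\varepsilon(X,Y)|}\bigr]$, and evaluate the Gaussian moment by splitting at the origin and completing the square, which yields exactly the two $\operatorname{erfc}$ terms with prefactors $e^{\sigma^2\mp\mu}$. One minor caveat: your final sanity check would actually reveal that $1-e^{\sigma^2}\operatorname{erfc}(\sigma)=\tfrac{2}{\sqrt{\pi}}\sigma+O(\sigma^2)$, i.e.\ the leading behavior is $O(\sigma)$ rather than the $O(\sigma^2)$ asserted in the paper's remark following Eq.~\ref{eq:tvmuzero}, but this concerns that side remark and not the validity of the theorem or of your proof.
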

\begin{proof}[Proof of Theorem~\ref{thm:tv_markov}]
By \Cref{lem:piPbound}, it suffices to bound $\|\tilde\pi(\tilde P-P)\|_{\TV}$:
\[
\|\tilde\pi-\pi\|_{\TV}
\le \frac{1}{1-r}\,\|\tilde\pi(\tilde P-P)\|_{\TV}.
\]
By \Cref{lem:varepsbound}, since $P$ and $\tilde P$ are MH
kernels constructed with the same proposal $q$ and perturbed acceptance
probabilities, we have
\begin{equation}\label{eq:avg_kernel_bound}
\|\tilde \pi(\tilde P - P)\|_{\TV}
\le
1-\sum_{y\in\mathcal X}
\mathbb{E}_{x\sim\tilde\pi}\!\left[q(y\mid x)e^{-|\varepsilon(x,y)|}\right].
\end{equation}
For the local proposal
$q(y\mid x)$ in Eq.~\ref{eq:app:local_prop},
Eq.~\ref{eq:avg_kernel_bound} becomes
\[
\|\tilde \pi(\tilde P - P)\|_{\TV}
\le
1-\frac{1}{n}\sum_{i=1}^n
\mathbb{E}_{x\sim\tilde\pi}\!\left[e^{-|\varepsilon(x,y_i(x))|}\right].
\] Equivalently, if $X\sim\tilde\pi$ and $Y\sim q(\cdot\mid X)$, then
\[
\frac{1}{n}\sum_{i=1}^n
\mathbb{E}_{x\sim\tilde\pi}\!\left[e^{-|\varepsilon(x,y_i(x))|}\right]
=
\mathbb{E}\!\left[e^{-|\varepsilon(X,Y)|}\right],
\]
and hence
\[
\|\tilde \pi(\tilde P - P)\|_{\TV}
\le
1-\mathbb{E}\!\left[e^{-|\varepsilon(X,Y)|}\right].
\]

\begin{figure}[t]
    \centering
    \includegraphics[width=1\linewidth]{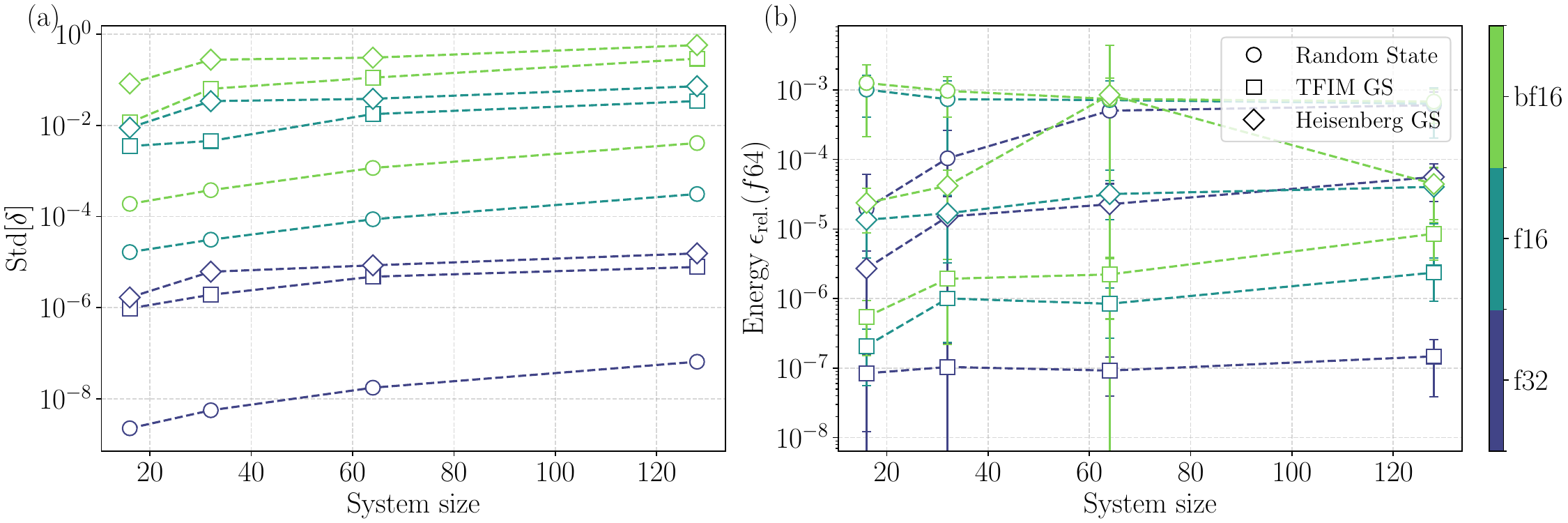}
    \caption{Sampling in reduced numerical precision is investigated for three one-dimensional wavefunctions: the TFIM at $h/J = 0.5$, the Heisenberg model with coupling $J = 1$, and a randomly initialized state, corresponding to a flat probability distribution. All the states are parametrized by a RBM with parameter density set to one. Panel (a) shows the standard deviation of the error term $\delta$ as a function of system size. Panel (b) reports the energies estimated via Monte Carlo sampling at different numerical precisions and compares them to the reference values obtained using double-precision arithmetic. For the randomly initialized state the TFIM Hamiltonian is used to compute the energy. The error bars indicate the MC sampling error.
}
    \label{fig: sigma different models}
\end{figure}

Under the assumptions of the theorem, if $X\sim\tilde\pi$ and
$Y\sim q(\cdot\mid X)$, then the increment
$\varepsilon(X,Y)=\delta(Y)-\delta(X)$ is Gaussian with mean $\mu$ and variance
$2\sigma^2$.  Therefore,
\[
\mathbb{E}\!\left[e^{-|\varepsilon(X,Y)|}\right]
=
\mathbb{E}\!\left[e^{-|\varepsilon|}\right]_{\varepsilon\sim\mathcal N(\mu,2\sigma^2)}.
\]

We compute this expectation explicitly:
\begin{align*}
\mathbb{E}\!\left[e^{-|\varepsilon|}\right]
&=
\frac{1}{\sqrt{4\pi\sigma^2}}
\int_{-\infty}^{\infty}
e^{-|\varepsilon|}\exp\!\left(-\frac{(\varepsilon-\mu)^2}{4\sigma^2}\right)\,d\varepsilon\\
&=
\frac{1}{\sqrt{4\pi\sigma^2}}
\int_{0}^{\infty} e^{-\varepsilon}
\left(
\exp\!\left(-\frac{(\varepsilon-\mu)^2}{4\sigma^2}\right)
+
\exp\!\left(-\frac{(\varepsilon+\mu)^2}{4\sigma^2}\right)
\right)\,d\varepsilon.
\end{align*}
For $a\in\mathbb R$, define
\[
I(a):=\int_{0}^{\infty} e^{-\varepsilon}
\exp\!\left(-\frac{(\varepsilon-a)^2}{4\sigma^2}\right)\,d\varepsilon.
\]
Completing the square gives
\[
-\varepsilon-\frac{(\varepsilon-a)^2}{4\sigma^2}
=
-\frac{(\varepsilon+2\sigma^2-a)^2}{4\sigma^2}+\sigma^2-a,
\]
so
\[
I(a)
=
e^{\sigma^2-a}\int_{0}^{\infty}
\exp\!\left(-\frac{(\varepsilon+2\sigma^2-a)^2}{4\sigma^2}\right)\,d\varepsilon.
\]
With the substitution $z=(\varepsilon+2\sigma^2-a)/(2\sigma)$, we obtain
\[
I(a)
=
e^{\sigma^2-a}\,2\sigma
\int_{\sigma-\frac{a}{2\sigma}}^{\infty} e^{-z^2}\,dz
=
e^{\sigma^2-a}\,\sigma\sqrt{\pi}\,
\operatorname{erfc}\!\left(\sigma-\frac{a}{2\sigma}\right),
\]
using $\int_{b}^{\infty}e^{-z^2}dz=\frac{\sqrt{\pi}}{2}\operatorname{erfc}(b)$.
Therefore,
\begin{align*}
\mathbb{E}\!\left[e^{-|\varepsilon|}\right]_{\varepsilon\sim\mathcal N(\mu,2\sigma^2)}
&=
\frac{1}{\sqrt{4\pi\sigma^2}}\bigl(I(\mu)+I(-\mu)\bigr)\\
&=
\frac{e^{\sigma^2}}{2}
\left[
e^{-\mu}\operatorname{erfc}\!\left(\sigma-\frac{\mu}{2\sigma}\right)
+
e^{\mu}\operatorname{erfc}\!\left(\sigma+\frac{\mu}{2\sigma}\right)
\right].
\end{align*}
Substituting into the bound from \Cref{lem:varepsbound} yields
\[
\|\tilde \pi(\tilde P - P)\|_{\TV}
\le
1-\frac{e^{\sigma^2}}{2}
\left[
e^{-\mu}\operatorname{erfc}\!\left(\sigma-\frac{\mu}{2\sigma}\right)
+
e^{\mu}\operatorname{erfc}\!\left(\sigma+\frac{\mu}{2\sigma}\right)
\right].
\]
Finally, applying \Cref{lem:piPbound} gives
\[
\|\tilde\pi-\pi\|_{\TV}
\le \frac{1}{1-r}
\left(
1-\frac{e^{\sigma^2}}{2}
\left[
e^{-\mu}\operatorname{erfc}\!\left(\sigma-\frac{\mu}{2\sigma}\right)
+
e^{\mu}\operatorname{erfc}\!\left(\sigma+\frac{\mu}{2\sigma}\right)
\right]
\right),
\]
which is exactly Eq.~\ref{eq:avg_bound}.
\end{proof}

\section{System Size Scaling}\label{app:system_size}

\begin{figure}[t]
    \centering
    \includegraphics[width=1\linewidth]{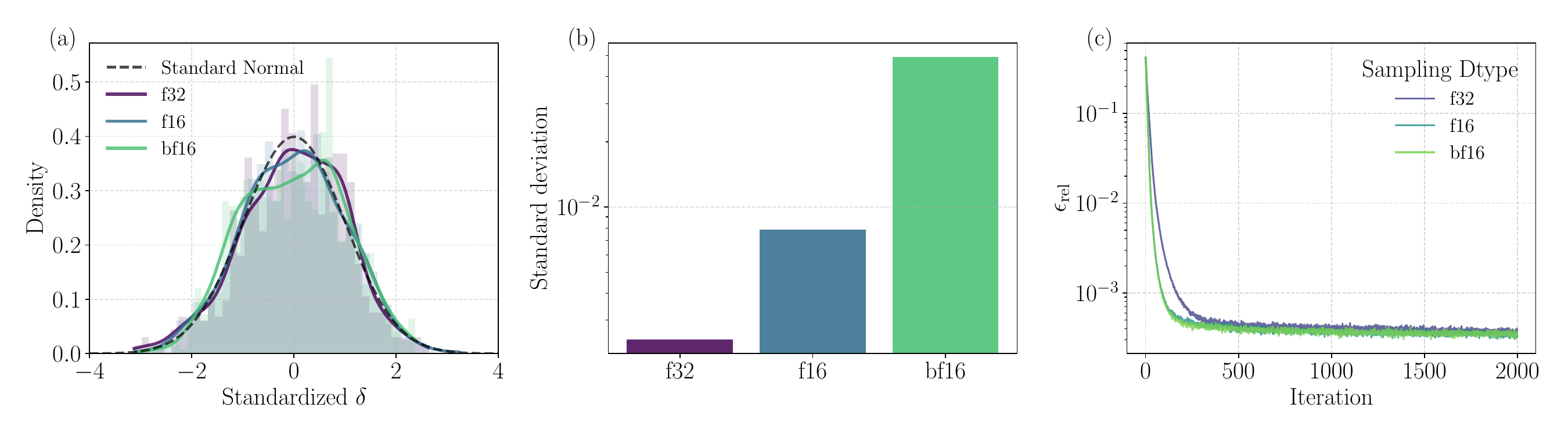}
    \caption{Panel (a) shows the distribution of the error term $\delta$ for the ground state of the Heisenberg model with coupling $J=1$ on a chain of length $12$, computed using different numerical precisions. The distributions have been standardized to allow a direct comparison of their shapes. Panel (b) presents the corresponding standard deviations of these distributions. Panel (c) reports the relative error of the loss functions as a function of training step, obtained using our mixed-precision optimization scheme for the same model with $64$ spins and using as architecture an RBM with parameter density set to one. The relative error is evaluated by comparing the estimated energies with reference values obtained via density matrix renormalization group (DMRG), which provides near-exact ground-state energies in one dimension.}
    \label{fig: Heisenberg}
\end{figure}

To investigate how the error term $\delta$ scales with system size, we perform the following numerical experiment. We train a set of RBMs while varying both the number of spins in the one-dimensional chain and the underlying physical model. In particular, we optimize the networks using two different Hamiltonians: the TFIM with $h/J = 0.5$ and the Heisenberg model with coupling $J = 1$. After training, all NQS are downcast from double precision to several reduced-precision data types, and the standard deviation of the error term $\delta$ is estimated for each case. The same analysis is carried out for a randomly initialized network, corresponding to a flat probability distribution.

As shown in panel (a) of Fig.~\ref{fig: sigma different models}, the standard deviation of $\delta$ exhibits a mild increase with system size. This behavior can be attributed to the growing number of variational parameters and, consequently, to the larger number of arithmetic operations performed during each forward pass of the network. The increased operation count leads to a greater accumulation of numerical errors at the model output. Nevertheless, the error remains well controlled and stays below unity for all system sizes and models considered.

In addition, we use the networks at different numerical precisions to estimate the energy of the corresponding Hamiltonians via Monte Carlo sampling. For the randomly initialized state, we instead compute the energy of the TFIM. As illustrated in panel (b) of Fig.~\ref{fig: sigma different models}, the relative energy error remains approximately constant as a function of system size. Furthermore, a clear dependence on the structure of the underlying probability distribution is observed. The TFIM ground state, which is the most strongly peaked distribution, exhibits the smallest relative error. This is followed by the Heisenberg ground state, which remains relatively peaked but is broader due to enhanced quantum fluctuations. Finally, the flat distribution yields the largest relative error. These observations are consistent with the theory derived in the main text.

\section{The Heisenberg Model}\label{app:heisenberg}

The (spin-$1/2$) Heisenberg model is defined by the Hamiltonian
\begin{equation}
    H = J\sum_{\langle i,j\rangle}\sigma_i\cdot\sigma_j,
\end{equation}
where $\sigma_i=(\sigma_i^x,\sigma_i^y,\sigma_i^z)^T$ and $\langle\cdot\rangle$ indicates the two nearest neighbors. Here we focus on the antiferromagnetic regime, $J=1$, in which neighboring spins energetically prefer to align antiparallel. In contrast to the transverse-field Ising model, the Heisenberg interaction is fully rotationally invariant in spin space and introduces quantum fluctuations intrinsically through the non-commuting spin components. The competition between different spin orientations on each bond prevents the simultaneous minimization of all interaction terms, giving rise to strong quantum correlations. 

In this section, we repeat the same experiments for the Heisenberg model that were carried out in the main text for the TFIM. Importantly, because the total magnetization is conserved in the Heisenberg model, we used a different sampling strategy compared to the TFIM. In the TFIM, a new configuration is generated by randomly selecting a single spin from the previous configuration and flipping it. In contrast, for the Heisenberg model, two spins are randomly chosen and their positions are swapped, ensuring that the total magnetization remains unchanged.\\

The results are shown in Fig.~\ref{fig: Heisenberg}. From panels (a) and (b), we observe that the distribution of $\delta$ is approximately Gaussian and that its standard deviation is small. In panel (c), we apply our mixed-precision optimization scheme, performing the sampling in low precision, for a chain of $64$ spins while optimizing the expectation value of the Heisenberg Hamiltonian. In this case as well, the loss function as a function of training steps is nearly identical across different data types, demonstrating that the mixed-precision approach does not compromise the accuracy of the optimization.

\section{Optimizations for the Two Phases of the TFIM}\label{app:gradients noise}

\begin{figure}
    \centering
    \includegraphics[width=1\linewidth]{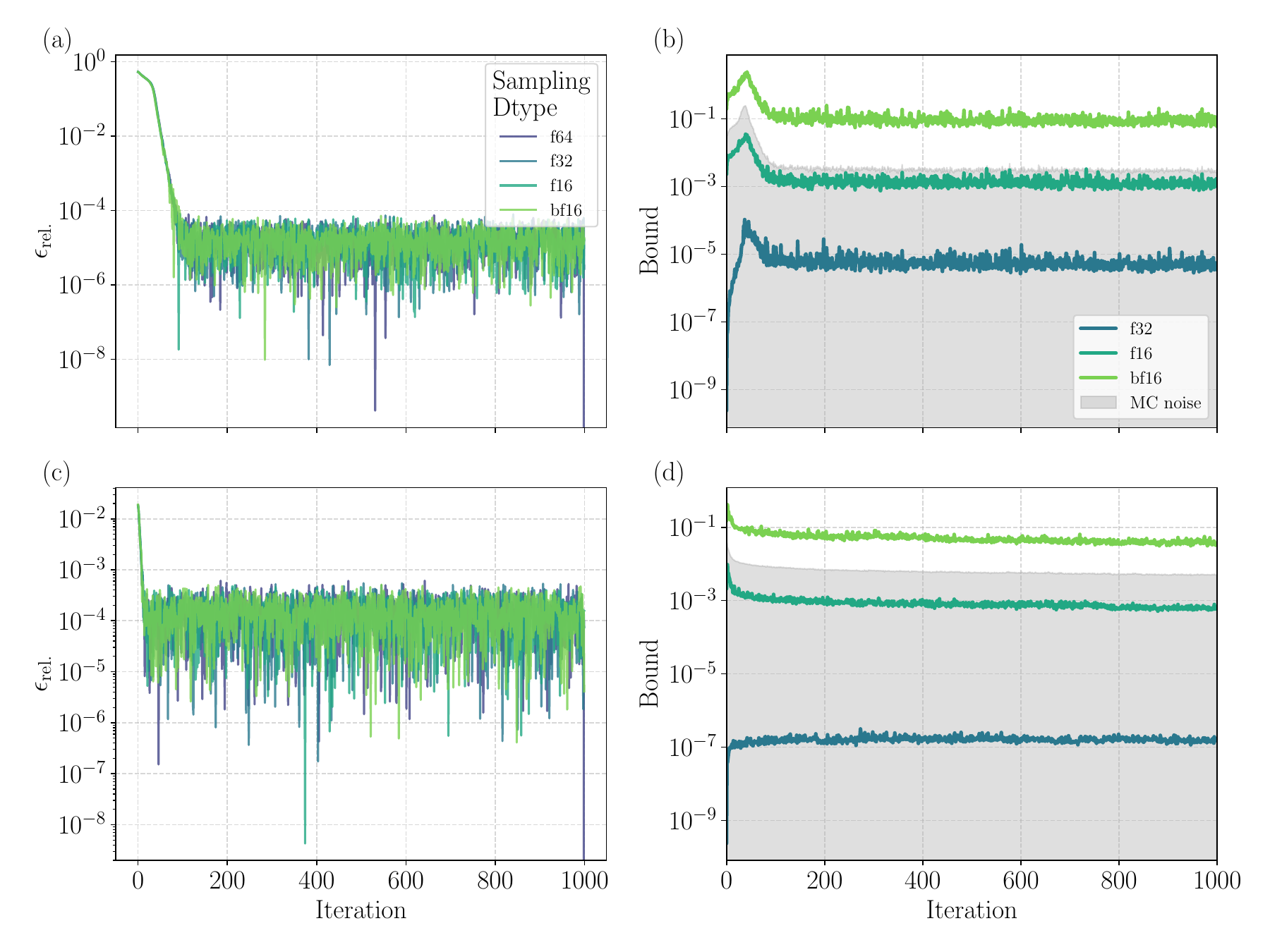}
    \caption{Panels (a, b) show the relative energy error with respect to the double-precision reference solution as a function of training step. Optimization is performed using the mixed-precision scheme introduced in this work, with sampling carried out in reduced precision. We use a ResCNN with kernel size $(3,3)$, four residual blocks, and 16 filters. Panels (c, d) report, for the same optimizations, the minimum between the KL bound and the bound in Eq.~\ref{eq:avg_bound}, evaluated on the absolute difference between gradients from perturbed and unperturbed distributions. Shaded regions indicate the Monte Carlo confidence interval, given by $3\sqrt{2}\sqrt{\mathrm{Var}[\nabla E_\theta(x)]/N_{\mathrm{samples}}}$. The optimization are carried out for the two-dimensional TFIM with linear system size $L=10$ ($N=L^2$), for (a, c) $h/J=1$ and (b, d) $h/J=5$}
    \label{fig:loss noise appendix}
\end{figure}

In this section, we repeat the experiment presented in the main text for the two distinct phases of the TFIM, away from criticality. Specifically, we optimize a neural quantum state based on a ResCNN architecture while sampling in different numerical precisions, using the two-dimensional TFIM as the Hamiltonian. We consider both the antiferromagnetic phase at $h/J = 1$ and the paramagnetic phase at $h/J = 5$. The results, shown in Fig.~\ref{fig:loss noise appendix}, provide further evidence that reduced-precision sampling does not affect the optimization dynamics, as the Monte Carlo noise consistently dominates the error introduced by low-precision arithmetic.

\section{Restricted Boltzmann Machine}\label{app: RBM}

The restricted Boltzmann machine (RBM) \cite{carleo2017solving,melko2019restricted} is a shallow neural network architecture. Given an input configuration $x$, the output of the RBM can be expressed as:  
\begin{align}
    \log\psi_\theta(x) = a\cdot x + \sum_{i=1}^{\alpha \cdot N}\log\cosh\left(( Wx + b)_i\right) 
\end{align}
where $a\in \mathbb{C}^N$ and $b\in \mathbb{C}^{\alpha N}$ are visible and hidden biases respectively, $W \in \mathbb{C}^{\alpha N \times N}$ is a weight matrix, and $\alpha$ is a scaling factor of the input dimension. As a consequence the learnable parameters consist of $\theta=\{a, b, W\}$.

\section{ResCNN}\label{app:ResCNN speedup}

\begin{figure*}
    \centering
    \includegraphics[width=1\linewidth]{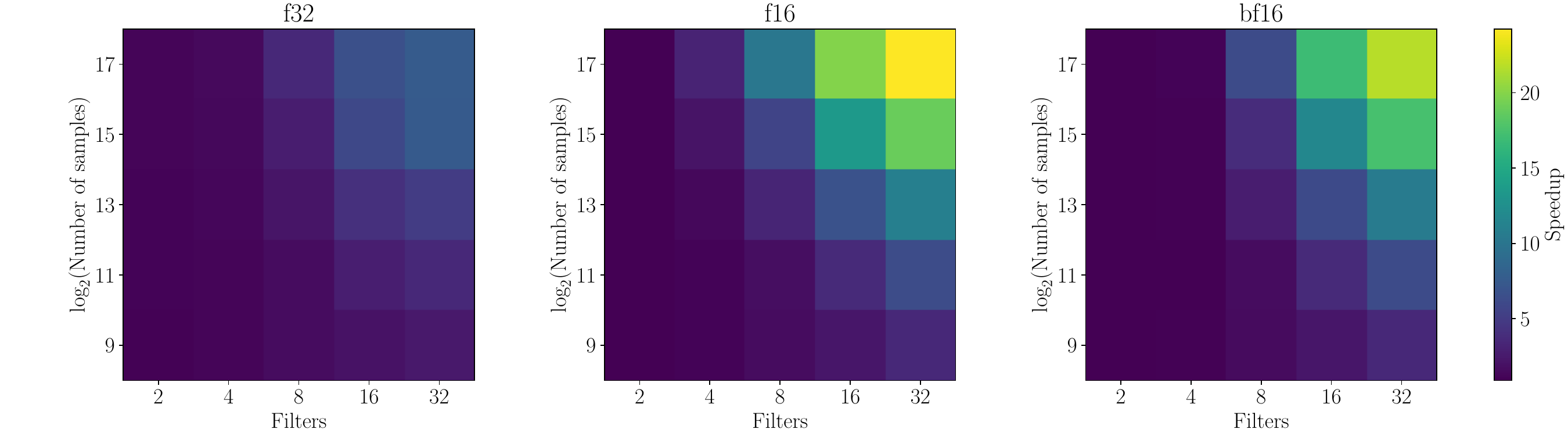}
    \caption{Speedup obtained by sampling from ResCNN with different data types. The results are shown as a function of both the number of samples $N_s$, with the number of Markov's chains set to $N_s/4$, and the number of filters. The number of residual blocks and kernel size are fixed to $4$ and $(3, 3)$ respectively, while the system size is fixed to $4\times 4$.}
    \label{fig:ResCNN speedup}
\end{figure*}

In our work we also considered a convolutional residual neural network (ResCNN) as implemented in \cite{barton_interpretable_2025}. Given a two dimensional input configuration $x$, the model first applies an initial convolutional layer embedding:
\begin{equation}
    h^{(0)} = \mathrm{Conv}(x),
\end{equation}
where $\mathrm{Conv}$ denotes a linear convolution with learnable kernels. Then $N_\text{res}$ residual blocks are sequentially applied as follows:
\begin{equation}
h^{(\ell+1)} = h^{(\ell)} + \mathrm{Conv}\!\left[\phi\!\left(\mathrm{Conv}\!\left(\phi\!\left(\mathrm{LN}\!\left(h^{(\ell)}\right)\right)\right)\right)\right], \qquad \ell = 0, \dots, N_{\text{res}}-1,
\end{equation}
where \(\mathrm{LN}(\cdot)\) denotes layer normalization and \(\phi(\cdot)\) is a pointwise nonlinearity (GELU in our implementation). After the residual stack, a final layer normalization is applied and the scalar network output is obtained by summing over all spatial and channel indices:
\begin{equation}
\log\psi_\theta(x)
=
\sum_{i,j,c}
\mathrm{LN}\!\left(h^{(N_{\textbf{res}})}\right)_{ijc}.
\end{equation}

In this section, we analyze the sampling speedup achieved using a ResCNN architecture with a kernel size of $(3,3)$ and $4$ residual blocks. The results, shown in Fig.~\ref{fig:ResCNN speedup}, confirm the trends already discussed in the main text. Increasing both the number of samples and the number of parameters in the architecture leads to a larger speedup, primarily due to the more efficient utilization of GPU memory. Notably, we observe a substantial speedup when moving from double to single precision. This can be attributed to the fact that convolution operations are highly optimized in single precision. By contrast, the performance difference between single and half precision remains comparable to that observed for linear layers.

\section{Effects on Other Parts of VMC}\label{app:vmc}
\begin{figure}
    \centering
    \includegraphics[width=1\linewidth]{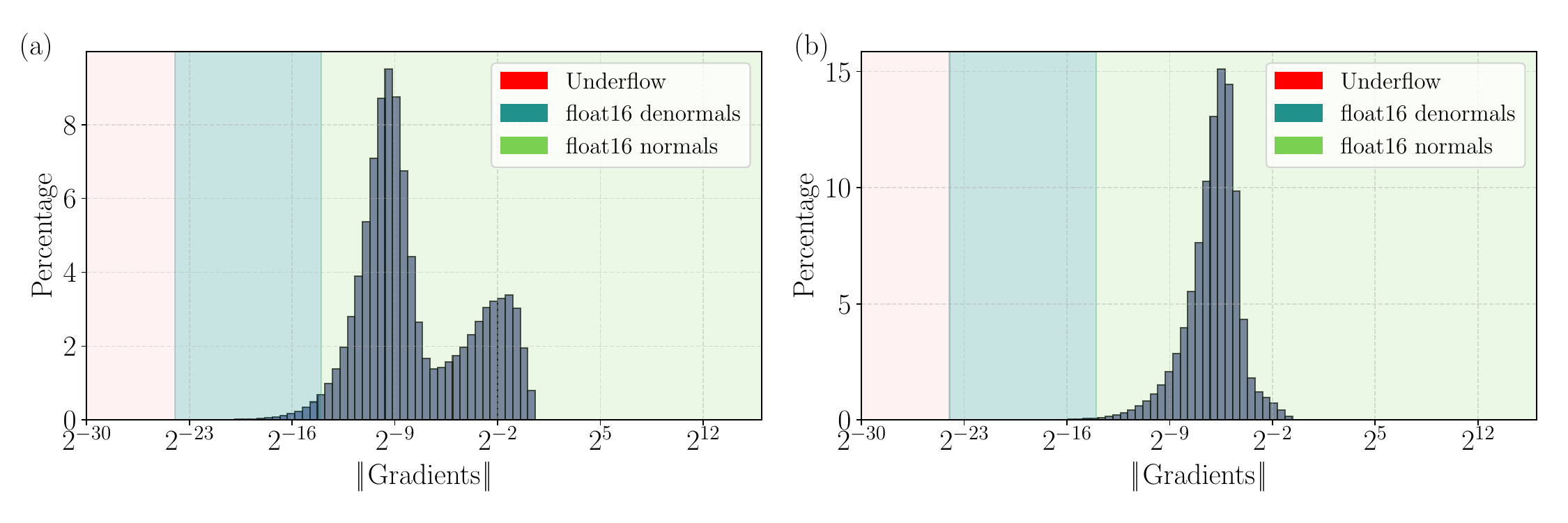}
    \caption{Dynamic range of gradients from an optimization performed in double precision for both the TFIM (panel (a)) and Heisenberg model (panel (b)) on a chain of $64$ spins. The figures indicate the normal, denormal, and underflow regions corresponding to float16 representation.}
    \label{fig: dynamic range}
\end{figure}
While the main text focused on the effects of using low-precision data types on the sampling process, in the following we discuss their impact on the other components of the VMC algorithm. In particular, we examine how performing the gradient computation and the preconditioning step in low precision influences the results.

\subsection{Gradients in Low Precision}\label{app: gradients}
The loss function $E_{\theta}$ is estimated by calculating the average $E_{\theta} = \mathbb{E}[\epsilon_{\theta}(x)]$ of the ``local energies" from samples $x_i\sim 
\pi_{\theta}(x)$
\begin{align}\label{eq:local_energies}
    \epsilon_{\theta}(x) = \sum_{x'\in C(x)} H(x,x') e^{\log \psi_{\theta}(x') - \log \psi_{\theta}(x)}.
\end{align}
The set $C(x)$ contains all configurations that have a non-zero contributions when $H$ is applied to $x$. In typical physical systems $|C(x)|$ scales at most polynomially in $n$, which makes evaluating the local energies efficient. 

The gradient of the energy is given by the force vector
\begin{align}\label{eq:forces}
    F := \nabla_{\theta} E_{\theta} = \mathbb{E}\left[O^*(x) \epsilon_{\theta}(x)\right] - \mathbb{E}\left[O^*(x) \right] \mathbb{E}\left[\epsilon_{\theta}(x)\right] 
\end{align}
where $O_{n}^*(x)= \partial_{\theta_n} \log \psi^*_{\theta}(x)$.

The forces $F$ depend on the derivatives of the log-amplitude $O^*(x)$. In low-precision data formats we have less resolution and so gradients that are too small may be flushed to zero which can inhibit training convergence. Large gradients on the other hand, may overflow to infinity, leading to unstable training and potential divergence. 

To mitigate these issues, several strategies have been proposed in the context of mixed-precision training \cite{micikevicius2017mixed}. One effective technique is to perform numerically sensitive operations, such as reductions (e.g., summations over large vectors) and dot products, in higher precision before casting the result back to a lower-precision format. This prevents the systematic accumulation of rounding errors without significantly impacting performance. 
Importantly, this compromise preserves most of the computational speedup, since the dominant cost in neural network training lies in large-scale dense matrix multiplications, which can still be efficiently executed in reduced precision. 

Complementary to these algorithmic safeguards, it is crucial to analyze the expected range of gradient values for a given problem. If gradients systematically fall into the subnormal regime, or worse outside the representable range, training becomes unreliable. A common remedy is loss scaling, where the loss function is multiplied by a constant factor during backpropagation to ensure that gradients remain within a well-represented numerical range. After backpropagation, the gradients are rescaled accordingly before being applied to the parameters. 

Before training the NQS in low precision, we first performed an optimization in double precision, recording the gradients at each iteration to analyze their dynamic range throughout the optimization process. This procedure was carried out for both the one-dimensional Heisenberg and TFIM models with $64$ spins. As illustrated in Fig.~\ref{fig: dynamic range}, only a small fraction of the gradients (less than $2\%$) enter the subnormal range, while all values remain above the threshold at which they would underflow in half precision. Importantly, this indicates that no loss rescaling is necessary to perform the optimization in low precision.

\subsection{Stochastic Reconfiguration in Low Precision}\label{app: SR}
\begin{figure}[t!]
    \centering
    \includegraphics[width=1\linewidth]{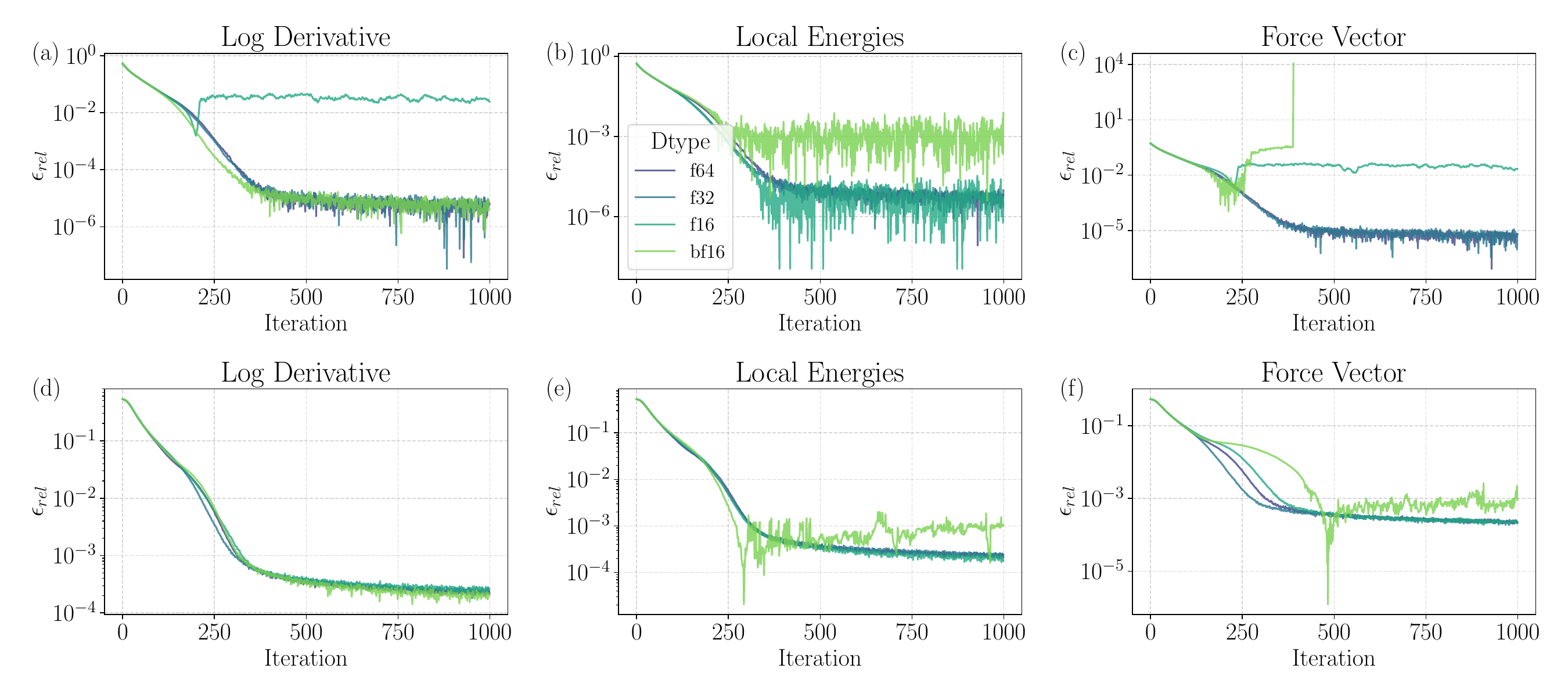}
    \caption{VMC optimizations in which low-precision data types are used to compute (a, d) the logarithmic derivatives $O(x)$, (b, e) the local energies $\epsilon(x)$, and (c, f) the full force vector. Panels (a, b, c) correspond to a weak regularization, implemented via a diagonal shift of $10^{-3}$ in the inversion of the $S$ matrix, whereas panels (d, e, f) use a stronger regularization of $10^{-1}$. All optimizations are performed using an RBM with parameter density equal to $1$, applied to the TFIM on a spin chain of length $64$, with $2^{17}$ Monte Carlo samples. The relative error is computed with respect to the ground-state energy obtained from DMRG.}
    \label{fig:LPEO}
\end{figure}
The canonical approach for optimizing any quantum state $\ketbra{\psi_{\theta}}$ is the method of Stochastic Reconfiguration (SR)~\cite{sorella2005wave}, which is the complex analogue of Natural Gradient Descent~\cite{amari1998natural}. The SR update step for the parameters $\theta$ is given by
\begin{align}\label{eq:sr_step}
    g &= S^{-1} F\nonumber\\
    \theta&\mapsto \theta - \eta g
\end{align}
for a chosen learning rate $\eta$, with
\begin{align}\label{eq:smat}
    S_{nm} &= \mathbb{E}\left[O_{n}^*(x) O_{m}(x)\right] - \mathbb{E}\left[O_{n}^*(x)\right] \mathbb{E}\left[O_{m}(x)\right]
\end{align}
In practice a Tikhonov regularization scheme is used to shift the diagonal of the matrix $S$ by a constant $S\mapsto S + I\lambda$ to avoid numerical instabilities in the inversion. 

For the inversion of the SR step in~Eq.~\ref{eq:sr_step}, there are two sources of error that affect the final parameter update. First, consider the system $S \tilde{g} = F+\delta F$ where $\delta F$ is the error in the force vector and $\tilde g$ is the resulting solution. We find that
\begin{align}\label{eq: condition number 1}
    \frac{\norm{\Delta g}}{\norm{g}}\leq \kappa(S)\frac{\norm{\delta F}}{\norm{F}},\quad \Delta g = \tilde g - g,
\end{align}
hence errors in the force vector are amplified by the condition number $\kappa(S)=\norm{S^{-1}}\norm{S}$ of the $S$-matrix. On the other hand, if we look at the system $(S+\delta S)g = F$. We see that the error $\delta S$ gets amplified in a similar fashion
\begin{align}
    \frac{\norm{\Delta g}}{\norm{g}} 
    &\leq \frac{\kappa(S) \frac{\norm{\delta S}}{\norm{S}}}{1 - \kappa(S) \frac{ \norm{\delta S}}{\norm{S}}}.
\end{align}

Hence, even if the error in computing gradients in low precision might be small, this gets amplified by the condition number of the $S$-matrix. In particular, we expect that for ill-conditioned $S$-matrices where $\kappa(S)$ is large by definition, we will be affected more by rounding errors due to low-precision. In such cases, it may be necessary to further regularize the $S$-matrix using larger diagonal shifts $\lambda$, which can slow down the convergence of training and, in some instances, even degrade the final quality of the wave function. Importantly, as shown in Eq. Eq.~\ref{eq: condition number 1}, the error due to low precision can compromise the inversion of the matrix, and as a consequence the whole optimization, even though only the right hand side of the equation, the force vector, is computed in low precision, while the matrix itself is computed in high precision. 

\subsection{Optimization with Gradients in Low Precision}
\begin{figure}[t]
    \centering
    \includegraphics[width=0.5\linewidth]{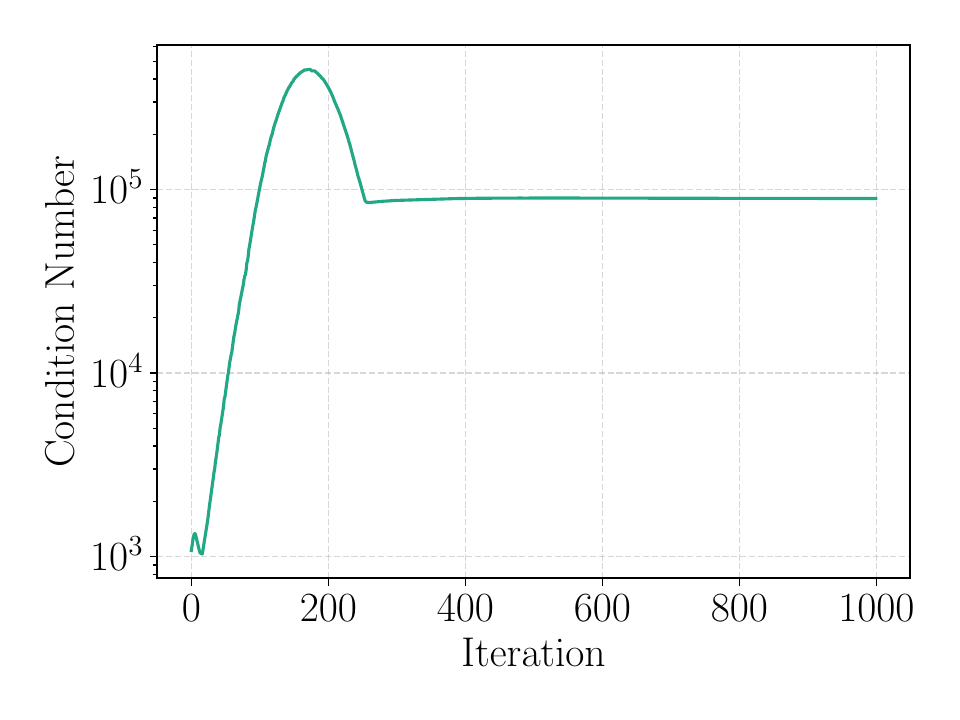}
    \caption{Condition number of the $S$ matrix evaluated at each training step during a VMC optimization of the TFIM on a spin chain of length $64$.}
    \label{fig:condition number}
\end{figure}

In this section, we numerically assess the considerations discussed above. Specifically, we design three distinct optimization protocols. As evident from Eq.~\ref{eq:forces}, the forces depend on two ingredients: the gradients of the logarithmic derivative $O(x)$ and the local energies $\epsilon(x)$. Accordingly, we consider the following tests: (1) an optimization in which only the gradients of the logarithmic derivative are computed in low precision; (2) an optimization in which only the local energies are computed in low precision; and (3) an optimization in which the entire force vector is computed in low precision. To isolate the impact of low-precision errors on the gradients, all remaining components of the VMC algorithm, namely the sampling procedure and the $S$ matrix, are evaluated in double precision.

As shown in panels (a, b, c) of Fig.~\ref{fig:LPEO}, the optimization yields essentially identical results when single precision is used in place of double precision, whereas the errors introduced by \texttt{f16} and \texttt{bf16} are sufficient to render the optimization highly unstable. Moreover, the instability is exacerbated when the entire force vector is computed in low precision. To further investigate this behavior, we repeat the same optimization while increasing the regularization of the $S$ matrix, raising the diagonal shift from $10^{-3}$ to $10^{-1}$. As illustrated in panels (d, e, f) of Fig.~\ref{fig:LPEO}, the optimization becomes significantly more stable, indicating that errors arising in the inversion of the $S$ matrix are the dominant source of instability. As expected, however, the increased regularization also results in a degradation of the final accuracy.

To further substantiate this interpretation, we compute the condition number of the $S$ matrix at each training step. As shown in Fig.~\ref{fig:condition number}, the condition number exhibits a pronounced peak during the early stages of training. Notably, the location of this peak coincides with the epoch at which the optimizations performed with different data types in Fig.~\ref{fig:LPEO} begin to diverge. This observation confirms that regions of parameter space characterized by a large condition number of the $S$ matrix amplify low-precision errors, ultimately leading to divergent and frequently unstable optimization trajectories.

This preliminary study highlights the challenges associated with computing gradients in low precision, even when they lie within the normal dynamic range (see Fig.~\ref{fig: dynamic range}). In particular, it shows that more sophisticated training strategies are required to ensure stability. While a more systematic analysis is deferred to future work, it is clear that stronger regularization or alternative techniques aimed at stabilizing the matrix inversion may come at the cost of reduced final accuracy or slower convergence, potentially offsetting the advantages of employing low-precision data types in this stage of VMC.
 
\end{document}